\documentclass[a4paper,12pt,notitlepage,accepted=2022-03-23,onecolumn]{quantumarticle}
\pdfoutput=1
\usepackage{algorithmic}
\usepackage[top=0.75in,bottom=0.75in,left=0.75in,right=0.75in]{geometry}
\usepackage{tabularx}
\usepackage{amsfonts}
\usepackage{amssymb}
\usepackage{amsthm}
\usepackage{amsmath}
\usepackage{amsthm}
\usepackage{amsxtra}
\usepackage{color}
\usepackage{xcolor}
\usepackage{authblk}
\usepackage[framemethod=tikz]{mdframed}
\usepackage[utf8]{inputenc} 
\usepackage[T1]{fontenc}
\usepackage{comment}
\usepackage{graphicx}
\usepackage{enumerate}
\usepackage[caption=false]{subfig}
\newtheorem{theorem}{Theorem}[section]
\newtheorem{lemma}[theorem]{Lemma}

\usepackage[ colorlinks = true,
linkcolor = blue,
urlcolor  = blue,
citecolor = blue,
anchorcolor = green,
]{hyperref}

%



\def\bra #1{\langle #1\vert}
\def\ket #1{\vert #1\rangle}

\def\tr{{\rm Tr}}




\newcommand*{\ExpE}{\mathbb{E}}

\newcommand*{\cE}{\mathcal{E}}

\newcommand*{\cG}{\mathcal{G}}

\newcommand*{\cT}{\mathcal{T}}

\newcommand{\bc}{\begin{center}}
\newcommand{\ec}{\end{center}}






\def\01{\{0,1\}}

\newcommand{\proj}[1]{|#1\rangle\langle#1|}







\newcommand*{\level}{p}
\usepackage{mathtools}
\mathtoolsset{showonlyrefs}


\author[1]{Sergey Bravyi}
\author[2]{Alexander Kliesch}
\author[3]{Robert Koenig}
\author[4]{Eugene Tang}
\affil[1]{IBM Quantum, IBM T.J. Watson Research Center, Yorktown Heights, NY 10598, USA}
\affil[2]{Zentrum Mathematik, Technical University of Munich, 85748 Garching, Germany}
\affil[3]{Institute for Advanced Study \& Zentrum Mathematik, Technical University of Munich, \\ 85748 Garching, Germany}
\affil[4]{Institute for Quantum Information and Matter, Caltech,
Pasadena, CA 91125}
\date{}

\title{Hybrid quantum-classical algorithms for approximate graph coloring}
\begin{document}

\maketitle
\begin{abstract}
We show how to apply the recursive quantum approximate optimization algorithm (RQAOA) to MAX-$k$-CUT, the problem of finding an approximate vertex $k$-coloring of a graph. We compare this proposal to the best known classical and hybrid classical-quantum algorithms. First, we show that the standard (non-recursive) QAOA fails to solve this optimization problem for most regular bipartite graphs at any constant level~$\level$: the approximation ratio achieved by QAOA is hardly better than assigning colors to vertices at random. Second, we construct an efficient classical simulation algorithm which simulates level-$1$ QAOA and level-$1$ RQAOA  for arbitrary graphs. In particular, these hybrid algorithms give rise to efficient classical algorithms, and no benefit arising from the use of quantum mechanics is to be expected.  Nevertheless, they provide a suitable testbed for assessing the potential benefit of hybrid algorithm: We use the simulation algorithm to perform large-scale simulation of level-$1$ QAOA and RQAOA with up to $300$~qutrits applied to ensembles of randomly generated $3$-colorable constant-degree graphs.  We find that level-$1$ RQAOA is surprisingly competitive: for the ensembles considered, its approximation ratios are often higher than those achieved by the best known generic classical algorithm based on rounding an SDP relaxation. This suggests the intriguing possibility that higher-level RQAOA may be a potentially useful algorithm for NISQ devices.

\end{abstract}

\section{Introduction\label{sec:intro}}
Combinatorial optimization is currently considered one of the most promising areas of application of near-term quantum devices. One of the primary restrictions of such devices is the fact that they may only execute  short-depth circuits with reasonable fidelity. This is a consequence of the lack of sophisticated fault-tolerance mechanisms. Variational quantum algorithms such as the quantum approximate optimization algorithm (QAOA)~\cite{farhiqaoa} can deal with this restriction because parameters such as the circuit depth/architecture can be chosen according to existing experimental restrictions. This is in contrast to more involved quantum algorithms e.g., for algebraic problems: these may involve circuit depths scaling with the problem size and have additional requirements on the connectivity of available inter-qubit operations.

QAOA and similar proposals are hybrid algorithms: Here we envision operations on the quantum device to be supplemented by efficient, that is, polynomial-time classical processing. As a result, the
algorithmic capabilities of such a hybrid setup should be compared to the class of efficient (polynomial-time) classical probabilistic algorithms. This means that corresponding proposals face stiff competition against decades of algorithms research in classical computer science. While there are complexity-theoretic arguments underscoring the power of e.g., constant-depth quantum circuits, the jury is still out on whether hybrid algorithms for near-term devices can indeed provide a provable computational advantage over comparable classical algorithms. 

Recent no-go results  show limitations of variational quantum algorithms for the well-studied MAX-CUT problem: in~\cite{brakoeklietan}, we showed that the  Goemans-Williamson-algorithm -- the best known classical algorithm for this problem --  outperforms
QAOA for any constant level $\level$ (which amounts to constant-depth for bounded-degree graphs)  in terms of the achieved approximation ratio. This result extends to  more general, possibly non-uniform $\mathbb{Z}_2$-symmetric hybrid local quantum algorithms, and shows that corresponding circuits will need a circuit depth growing at least logarithmically with the problem size to yield a better approximatio ratio.  More recently,  Farhi, Gamarnik and Gutmann~\cite{farhi2020quantum} exploited the spatial uniformity of the QAOA algorithm to give an extremely elegant argument demonstrating a similar logarithmic-depth lower bound for QAOA for random $d$-regular graphs. The title of reference~\cite{farhi2020quantum} aptly summarizes this conclusion using the words ``the QAOA needs to see the whole graph''.

To go beyond these negative results, one could attempt to simply use logarithmic-depth circuits with the reasoning that for small to intermediate problem sizes, the corresponding circuit depths may still be amenable to realization by a near-term device. In addition to the problem of fault-tolerance, the potential merits of this idea are unfortunately difficult to assess by means of classical simulation. Indeed, the very idea that this approach yields computational benefits over classical algorithms is in tension with 
efficient classical simulability of corresponding quantum processes.

An alternative to this necessarily limited approach is to try to find new ways of leveraging the information-processing capabilities of short-depth circuits by introduction of non-local (classical) pre- and post-processing steps, thereby sidestepping the key assumption of locality in the aforementioned  no-go results. As long as one restricts to
procedures where these additional processing steps can be executed efficiently (e.g., in polynomial time) and the quantum device is used only a polynomial number of times, the resulting hybrid algorithm
still has the feature of being efficiently executable with near-term hardware. At the same time, it may provide additional descriptive power, ultimately (hopefully) resulting in improved approximation ratios.

The recursive quantum approximate optimization algorithm (RQAOA) 
is a proposal of this kind which makes use of  QAOA (with a constant level~$\level$) in a recursive fashion with the goal of improving approximation ratios. Numerical evidence obtained for MAX-CUT on random graphs  indicates that even level-$1$ RQAOA significantly improves upon QAOA (at the same level)  for the MAX-CUT problem~\cite{brakoeklietan}. While the problem of establishing a rigorous lower bound on the approximation ratio achieved by this hybrid algorithm remains open (except for very special families of instances), this indicates that RQAOA  -- especially at higher levels~$\level$ -- has the potential to yield results competitive with the best known classical algorithms.

Here we consider the MAX-$k$-CUT problem,  an optimization version of the graph $k$-coloring problem. Suppose $G=(V,E)$ is a graph with $n=|V|$ vertices and $e=|E|$ edges. Given an integer $k\geq 2$, the goal is to find an approximate $k$-coloring of vertices of~$G$ which maximizes the number of edges whose endpoints have different colors. For each vertex $j\in V$, introduce a variable $x_j\in \mathbb{Z}_k$ which represents a color assigned to~$j$. The $k$-coloring cost function to be maximized is defined as
\begin{align}
C(x)&=\sum_{(i,j)\in E}(1-\delta_{x_i,x_j})\qquad\textrm{ for }\qquad x\in \mathbb{Z}_k^n\ .\label{eq:figureofmeritmaxkcut}
\end{align}
The performance of a $k$-coloring algorithm on a given graph $G$ is usually quantified by its approximation ratio $\alpha$,
that is, the ratio between the expected value of the cost function~$C(x)$ on a coloring~$x$
produced by the algorithm and the maximum value~$\max_x C(x)$.

The MAX-$k$-CUT problem can also be viewed as an anti-ferromagnetic $k$-state Potts model. The standard MAX-CUT problem corresponds to~$k=2$. The problem is well-studied. Consider first the special case where $G$ is a $k$-colorable graph. Clearly, a uniformly random assignment of colors~$x$ achieves an approximation ratio of~$1-1/k$ on average.
 For the case where~$k$ is a power of two, Cho, Raje and Sarrafzadeh~\cite{ChoRaje98} constructed an $O((e+n)\log k)$-time algorithm which achieves an approximation ratio of $1-1/k(1-1/n)^{\log k}$, improving upon 
a deterministic $O(enk)$-time algorithm~\cite{vitanyi81} achieving the same ratio $1-1/k$ as random coloring (and obtained by derandomizing the latter). In the general case, when $G$ may not be $k$-colorable, Frieze and Jerrum~\cite{friezejerrum} gave an algorithm achieving an approximation ratio~$1-1/k+2\log k/k^2$ for arbitrary sufficiently large~$k$. This is known to be close to optimal since no polynomial-time algorithm can achieve an approximation ratio better than~$1-1/(34k)$ unless  $P=NP$~\cite{kann97}, and is indeed optimal if one assumes the Unique Games Conjecture~\cite{consequencesugc}. Frieze and Jerrum's algorithm is based on an SDP relaxation and a randomized rounding scheme inspired by Goemans and Williamson's algorithm for MAX-CUT~\cite{goeWill95} and comes with detailed estimates of the approximation ratio $\alpha_k$ achieved by the algorithm, namely  
\begin{align}
\alpha_2&\geq 0.878567\\
\alpha_3 &\geq 0.800217\\
\alpha_4&\geq 0.850304\\
\alpha_5&\geq 0.874243
\end{align}
Here the bound on $\alpha_2$ matches the Goemans-Williamson algorithm for MAX-CUT. This was further improved by a new algorithm by Klerk et al.~\cite{klerketal} with a guaranteed approximation ratio of
\begin{align}
\alpha_3&\geq 0.836008\label{eq:alphathreebound}\\
\alpha_4&\geq 0.857487\ .
\end{align}
Their algorithm is based on the properties of the Lovasz $\vartheta$-function and achieves the best currently known approximation ratio (for polynomial-time algorithms) for $k=3$. Goemans and Williamson themselves~\cite{goemansmax3cut} independently introduced an algorithm based on so-called ``complex semidefinite programming'' that matches this ratio for $k=3$. Unfortunately, the analysis involved in establishing the bound~\eqref{eq:alphathreebound} is significantly more complicated than the MAX-CUT case and is not known to be generalizable to arbitrary~$k$. Fortunately, Newman~\cite{newman18} describes a simple rounding procedure that leads to an algorithm provably matching the bound~\eqref{eq:alphathreebound} while only being slightly worse than Frieze/Jerrum for larger values of~$k$. Finally, it should also be noted that dense $3$-colorable graphs can be $3$-colored in (randomized) polynomial time~\cite{alonkahale97}.

Here we study approximation ratios achieved by QAOA and RQAOA, respectively for MAX-$k$-CUT with $k>2$, and compare these to those achieved by the best known classical approximation algorithms for this problem discussed above.

\paragraph{Outline}
In Section~\ref{sec:hybridalgorithms}, we discuss how to formulate QAOA and RQAOA for the MAX-$k$-CUT-problem.  In Section~\ref{sec:upperbound}, we establish limitations on constant-level QAOA.  In Section~\ref{sec:classSimQAOA}, we describe an efficient classical simulation algorithm for simulating level-$1$ QAOA and level-$1$ RQAOA. In Section~\ref{sec:numericalresults} , we discuss our numerical findings comparing level-$1$ QAOA and  level-$1$ RQAOA with the best known classical algorithm for MAX-$k$-CUT.

\section{Hybrid algorithms for MAX-$k$-CUT\label{sec:hybridalgorithms}}
Here we give a brief description of how QAOA and RQAOA can be adapted to apply to MAX-$k$-CUT for any $k\geq 2$.
\subsection{ QAOA to MAX-$k$-CUT}
Solving MAX-$k$-CUT by level-$\level$ QAOA (in the following denoted by QAOA$_\level$) proceeds in an established fashion as in the case of MAX-CUT (i.e., $k=2$). One notable feature is that in the MAX-$k$-CUT problem, it is natural to work with~$n$ qudits of dimension~$k$ each instead of qubits (all quantum circuits considered below can be simulated on a standard qubit-based quantum computer with a constant-factor overhead). We use the $n$-qudit cost function Hamiltonian
\begin{align}
C=\sum_{1 \leq i<j \leq n} \sum_{b \in \mathbb{Z}_{k}} J_{i, j}(b) \Pi_{i, j}(b)\label{eq:costfunctionhamiltonian}
\end{align}
where $J_{i, j}(b)$ are real coefficients and
\begin{align}
\Pi(b)=\sum_{a \in \mathbb{Z}_{k}}|a, a+b\rangle\langle a, a+b| \label{eq:Pibprojectordef}
\end{align}
is a diagonal projector acting on $\mathbb{C}^{k} \otimes \mathbb{C}^{k}$. Here and below the addition of color indices is performed modulo~$k$. The subscripts $i, j$ in $\Pi_{i, j}(b)$ indicate which pair of qudits is acted upon by $\Pi(b)$. By definition, $\Pi_{i,j}(b)=\Pi_{j,i}(-b)$.

Equation~\eqref{eq:costfunctionhamiltonian} defines a general class of cost function Hamiltonians. The MAX-$k$-CUT problem on a graph $G=(V,E)$ associated with the cost function~\eqref{eq:figureofmeritmaxkcut} corresponds to the choice of coefficients 
 \begin{align}
J_{i,j}(b) = 
\begin{cases}
0\ , & (i,j)\notin E \\
1 - \delta_{b, 0}\ , & (i,j) \in E\ .
\end{cases}\label{eq:jijbeq}
\end{align}
The Hamiltonian $C$ defined Eq.~\eqref{eq:costfunctionhamiltonian}
commutes with the symmetry operator $X^{\otimes n}$, where
\begin{align}
X=\sum_{c \in \mathbb{Z}_{k}}|c+1\rangle\langle c|\ 
\end{align}
is the generalized Pauli-$X$ operator. This is analogous to the $\mathbb{Z}_2$-symmetry of the Ising model exploited in~\cite{brakoeklietan}. Motivated by the corresponding ansatz for MAX-CUT, we generalize the level-$\level$ QAOA ansatz to qudits as
\begin{align}
\ket{\psi(\beta,\gamma)}=U(\beta,\gamma)\ket{+^n}\  ,\label{eq:psigammaqaoa}
\end{align}
where the level-$\level$ QAOA unitary is given by
\begin{align}
U(\beta,\gamma)&=\prod_{t=1}^{\level} B(\beta^{(t)})^{\otimes n}e^{-i\gamma^{(t)} C}\ , 
\end{align}
with $\beta=(\beta^{(1)},\ldots,\beta^{(\level)})\in(\mathbb{R}^k)^\level$, $\gamma=(\gamma^{(1)},\ldots,\gamma^{(\level)})\in\mathbb{R}^\level$ and where for $\beta\in\mathbb{R}^k$, the unitary~$B(\beta):\mathbb{C}^k\rightarrow\mathbb{C}^k$ 
is diagonal in the eigenbasis of~$X$ and given by
\begin{align}
B(\beta)&=\sum_{a\in\mathbb{Z}_k}e^{i\beta_a}\proj{\phi_a}\ ,\qquad \ket{\phi_a}\equiv Z^a\ket{+}\ .\label{eq:Deltadef}
\end{align}
In this expression,
\begin{align}
Z&=\sum_{a\in\mathbb{Z}_k} \omega^a \proj{a}\ ,\qquad \omega\equiv e^{2\pi i/k}
\end{align}
is the generalized Pauli-$Z$ operator, whereas $\ket{+}\in\mathbb{C}^k$ is the $+1$ eigenvector of $X$, that is, $\ket{+}=k^{-1/2}\sum_{b\in\mathbb{Z}_k}\ket{k}$. Level-$\level$ QAOA for MAX-$k$-CUT proceeds by
\begin{enumerate}[(i)]
\item \label{it:stepfirstqaoa}
first maximing the expected value $\bra{\psi}C\ket{\psi}$ where $\ket{\psi}\equiv \ket{\psi(\beta,\gamma)}$ over $(\beta,\gamma)\in (\mathbb{R}^k)^\level\times\mathbb{R}^\level$. 
 \item
then measuring an energy-maximizing state $\ket{\psi_*}=\ket{\psi(\beta_*,\gamma_*)}$ in the computational basis.
 \end{enumerate}
The output of this process is a coloring $x\in\mathbb{Z}_k^n$ achieving an expected approximation ratio 
\begin{align}
\max_{\beta,\gamma}\bra{\psi}C\ket{\psi}/\max_x C(x)\ .
\end{align} This concludes the description of QAOA$_\level$.

\subsection{Adapting RQAOA to MAX-$k$-CUT}
The RQAOA proceeds by successively reducing the size of the problem, eliminating a single variable in each step by a procedure called correlation rounding.   To formulate RQAOA for MAX-$k$-CUT, we begin by noting that the family of Hamiltonians defined by~\eqref{eq:costfunctionhamiltonian} is closed under variable eliminations (up to irrelevant additive constants).

Level-$p$ RQAOA proceeds by iterative applications of several single variable elimination steps as described by~\eqref{it:stepfirstqaoa} and~\eqref{it:steptworqaoa} described below. We note that in the very first variable elimination step, the scalar~$M_{i,j}(b)$ (for a fixed pair of vertices $(i,j)$) computed in~\eqref{it:stepfirstqaoa} depends only on whether or not~$b$ is non-zero.  This is a consequence of the specific form~\eqref{eq:jijbeq} of the coefficients~$J_{i,j}(b)$ in the (initial) MAX-$k$-CUT cost function Hamiltonian~$C$ (Eq.~\eqref{eq:costfunctionhamiltonian}). However, this is no longer necessarily the case after one or more variable elimination steps have been completed since the cost function Hamiltonian is updated according to~\eqref{it:steptworqaoa}.

A single variable elimination step of level-$\level$ RQAOA works as follows:
\begin{enumerate}[(i)]
\item\label{it:steponerqaoa}
First, maximize the expected value $\bra{\psi}C\ket{\psi}$ with $\ket{\psi}=\ket{\psi(\beta,\gamma)}$ over $(\beta,\gamma)\in(\mathbb{R}^k)^\level\times \mathbb{R}^\level$. Then compute
the mean value
\begin{align}
M_{i,j}(b)=\bra{\psi}\Pi_{i,j}(b)\ket{\psi}\qquad\textrm{ for all pairs of vertices } (i,j)\ .
\end{align}
Note that $0\leq M_{i,j}(b)\leq 1$ since $\Pi_{i,j}(b)$ is a projector. 

\item\label{it:steptworqaoa}
Next, find a pair of vertices $(i,j)$ and a color $b\in\mathbb{Z}_k$ with the largest magnitude of $M_{i,j}(b)$ (breaking ties arbitrarily). Then impose the constraint
\begin{align}
x_j &= x_i+b \pmod k\ ,\label{eq:xjxiconstraint}
\end{align}
restricting the search space to the span of computational basis vectors~$\ket{x}$ associated with colorings  $x\in (\mathbb{Z}_k)^{n}$
satisfying~\eqref{eq:xjxiconstraint}. Observe that $\ket{\psi}$ 
has support on such basis states if and only if $M_{i,j}(b)=1$. To eliminate the variable $x_i$, the constraint~\eqref{eq:xjxiconstraint} is inserted into the cost function Hamiltonian as follows: use the identity
\begin{align}
\Pi_{i,j}(b)\Pi_{j,h}(a-b)&=\Pi_{i,j}(b)\Pi_{i,h}(a)
\end{align}
which holds for all $h\not\in \{i,j\}$ and all $a\in\mathbb{Z}_k$. Thus $\Pi_{i,h}(a)=\Pi_{j,h}(a-b)$ on the subspace satisfying the constraint. Replacing $\Pi_{i,h}(a)$ by $\Pi_{j,h}(a-b)$ in the cost function Hamiltonian for all $h\not\in\{i,j\}$ one gets a new Hamiltonian~$C'$ of the form~\eqref{eq:costfunctionhamiltonian} (up to an additive constant) acting on $n-1$~variables, i.e., $C'$ acts trivially on the $i$-th qudit. The cost function $C'$ is defined on a graph $G'$ with $n-1$ vertices obtained from $G$ by identifying the vertices $i$ and $j$.
\end{enumerate}
By construction, the maximum energy of $C'$ coincides with the maximum energy of~$C$ over the subset of assignments satisfying the constraints~\eqref{eq:xjxiconstraint}.  Since the new Hamiltonian $C'$ 
acts trivially on the qudit~$i$,  this qudit can be removed from the simulation. This completes the variable elimination step.

RQAOA$_\level$
executes several variable elimination steps in succession, eliminating one variable in each 
recursion
until the number of variables reaches a predefined cutoff value $n_{c}$. The remaining Hamiltonian then only depends on $n_{c}$ variables is minimized by a purely classical algorithm, for example brute-force search. An (approximate) solution~$x\in\mathbb{Z}_k^n$ of the original problem can then be obtained recursively by reconstructing eliminated variables using the constraints~\eqref{eq:xjxiconstraint}.

Since the introduction of RQAOA in~\cite{brakoeklietan}, other modifications of QAOA involving iterated rounding procedures have been proposed, see e.g., \cite[Section~V.A]{googlelowdepths}. In contrast to the variant discussed there, RQAOA's rounding procedure is deterministic and relies on rounding correlations between qudits rather than individual spin polarizations. 

Further variations of RQAOA have been proposed and studied in~\cite{egger2020warmstarting,SDP2020bridging}. There, the authors consider ``warm-starting'' the algorithm by beginning with a solution returned by an efficient classical algorithm (for the case of MAX-CUT, the Goemans-Williamson algorithm) instead of the standard product state. They provided numerical evidence that both QAOA and RQAOA can achieve better performance when supplemented with ``warm-starting'', making this a promising potential avenue for future investigations.

\section{Limitations of QAOA$_p$ applied to MAX-$k$-CUT}
\label{sec:upperbound}
Limitations for level-$p$ QAOA with $p=O(\log n)$ applied to MAX-CUT were obtained for ensembles of random $d$-regular graphs in~\cite{farhi2020quantum}. Here we show analogous results 
for QAOA applied to MAX-$k$-CUT for $k>2$.   Our main result is the following theorem. It shows that unless the level~$\level$ of QAOA grows at least logarithmically with~$n$, QAOA cannot be more than marginally better (for large $n$ and $d$) than randomly guessing a coloring, for most $d$-regular bipartite graphs. We denote by  $\cG^{bi}_{n,d}$ the ensemble of uniformly random $d$-regular bipartite graphs on $n$ vertices. For a graph $G$, we denote by  $\mathrm{MC}_k(G)=\max_{x\in\mathbb{Z}_k^n}C(x)$ the maximum $k$-cut of~$G$. Note that $\mathrm{MC}_k(G)=nd/2$ for any $d$-regular bipartite graph $G$. Let
$\alpha_\level^{\mathrm{MC}_k}(G)$ be the approximation ratio of MAX-$k$-CUT QAOA$_\level$ applied to $G$, that is, $\alpha_\level^{\mathrm{MC}_k}(G)=\bra{\psi(\theta_*))}C(G)\ket{\psi(\theta_*)}/\mathrm{MC}_k(G)$ where $\theta_*=(\beta_*,\gamma_*)\in(\mathbb{R}^k)^\level\times \mathbb{R}^\level$ is an optimal set of angles, and where $C(G)$ is the cost function Hamiltonian for $G$ defined by Eq.~\eqref{eq:costfunctionhamiltonian}.
\begin{theorem}\label{thm:kcut_bound}
 There is a constant $\zeta>0$ such that 
\begin{align}
\Pr_{G\sim \cG^{bi}_{n,d}}\left[\alpha_\level^{\mathrm{MC}_k}(G)\geq (1-1/k)+o_d(1)+o_n(1)\right] & \leq o(1)
\end{align}
for all degrees $d$ satisfying $d\geq \zeta$ and $d=o(\sqrt{n})$ and all levels $\level <  \frac{1}{2}\log_d n$.
\end{theorem}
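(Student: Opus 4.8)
The plan is to adapt the argument of Farhi, Gamarnik and Gutmann~\cite{farhi2020quantum}, whose two essential ingredients are the \emph{spatial uniformity} of the QAOA ansatz and the fact that a random $d$-regular graph, bipartite or not, looks locally like the infinite $d$-regular tree. The first step is to make precise that QAOA$_\level$ is a local algorithm. Passing to the Heisenberg picture, each of the $\level$ layers $e^{-i\gamma^{(t)}C}$ spreads the support of an operator along edges of $G$ by graph distance one, while the single-qudit layers $B(\beta^{(t)})^{\otimes n}$ do not spread support at all; hence for any edge $(i,j)\in E$ the value $M_{i,j}(b)=\bra{\psi(\beta,\gamma)}\Pi_{i,j}(b)\ket{\psi(\beta,\gamma)}$ depends only on the subgraph induced by the radius-$\level$ neighborhood of $\{i,j\}$, since $\ket{+^n}$ is a product state and the local expectation factorizes over connected components. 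When this neighborhood is a tree it is the ``doubled'' depth-$\level$ $d$-regular tree rooted at an edge, and then $M_{i,j}(0)$ equals a \emph{universal} function $\mu_{\level,d}(\beta,\gamma)\in[0,1]$ of the angles alone. The crucial point is that $\mu_{\level,d}$ does not know whether the ambient graph is bipartite: as an abstract rooted tree the local neighborhood is identical in both cases.

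Next I would collect the random-graph inputs, all holding with probability $1-o(1)$ in the regime $d=o(\sqrt n)$ --- the condition under which the configuration model (and its bipartite analogue) has expected number of loops and multi-edges $O(d^2/n)=o(1)$, so that estimates transfer to the uniform model. First, since $\level<\tfrac12\log_d n$ the radius-$(\level+1)$ neighborhood of an edge has size $d^{O(\level)}=o(n)$, and a standard exploration argument gives $\Pr[\text{an edge is not tree-like}]=O(d^{2\level}/n)$, so all but an $o(1)$-fraction of edges are tree-like, both for $G\sim\cG^{bi}_{n,d}$ and for $G'\sim\cG_{n,d}$ (the ensemble of uniformly random $d$-regular graphs). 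Second, a first-moment bound: for a fixed $k$-coloring of $[n]$ the number of properly cut edges of $G'\sim\cG_{n,d}$ concentrates (Azuma/McDiarmid applied to the pairing process) around its mean, which is at most $(1-1/k)\tfrac{nd}{2}$ and is maximized by balanced colorings, with tails $\exp(-\Omega(\eps^2 nd))$; a union bound over the $k^n$ colorings forces $\eps=O(\sqrt{\ln k/d})$, whence
\[
\mathrm{MC}_k(G')\le\Bigl(1-\tfrac1k+O\bigl(\sqrt{\tfrac{\ln k}{d}}\bigr)\Bigr)\frac{nd}{2}\qquad\text{w.h.p.}
\]
The hypothesis $d\ge\zeta$ is what makes this slack a genuine $o_d(1)$.

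Now fix a graph $G'$ satisfying both properties above (such $G'$ exists since each holds w.h.p.) and apply QAOA$_\level$ to $G'$ with \emph{arbitrary} angles $(\beta,\gamma)$. Using that the QAOA energy equals $\sum_{(i,j)\in E}(1-M_{i,j}(0))$, the tree-like edges each contribute exactly $1-\mu_{\level,d}(\beta,\gamma)$ and the remaining $o(1)$-fraction contribute at least $0$ each, so the energy is at least $\bigl(1-\mu_{\level,d}(\beta,\gamma)\bigr)(1-o(1))\tfrac{nd}{2}$; since it cannot exceed $\mathrm{MC}_k(G')$ (the Hamiltonian $C(G')$ being diagonal), we get $1-\mu_{\level,d}(\beta,\gamma)\le 1-\tfrac1k+o_d(1)+o_n(1)$ for \emph{every} choice of angles. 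Finally, for $G\sim\cG^{bi}_{n,d}$ --- which w.h.p.\ is tree-like on all but an $o(1)$-fraction of edges --- and its optimal angles $\theta_*=(\beta_*,\gamma_*)$, the QAOA energy is $\sum_{(i,j)\in E}(1-M_{i,j}(0))\le\bigl(1-\mu_{\level,d}(\theta_*)+o_n(1)\bigr)\tfrac{nd}{2}\le\bigl(1-\tfrac1k+o_d(1)+o_n(1)\bigr)\tfrac{nd}{2}$, while $\mathrm{MC}_k(G)=\tfrac{nd}{2}$ exactly; dividing yields $\alpha_\level^{\mathrm{MC}_k}(G)\le(1-1/k)+o_d(1)+o_n(1)$ on the relevant high-probability event, which is the claim.

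The conceptual heart --- that the local tree value is pinned down by, hence no larger than, the max-$k$-cut density of a \emph{generic} random regular graph, which in turn lies strictly below the bipartite value $nd/2$ --- is clean once set up. The real work, and where I expect the difficulty to lie, is quantitative: proving the first-moment bound with the correct $d$-dependence $O(\sqrt{\ln k/d})$, establishing the concentration needed for it and for the count of tree-like edges in the (bipartite) configuration model, and verifying that all error terms genuinely collapse into $o_d(1)+o_n(1)$ uniformly over $\level<\tfrac12\log_d n$ rather than degrading as $\level$ grows with $n$.
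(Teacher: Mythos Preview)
Your proposal is correct and follows essentially the same route as the paper: locality of QAOA$_\level$, tree-like neighborhoods in random $d$-regular graphs (bipartite or not), a universal tree value $\mu_{\level,d}$, and a transfer argument that pins this value via the max-$k$-cut of a \emph{non-bipartite} random regular graph before applying it to the bipartite ensemble.

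The one substantive difference is the source of the upper bound on $\mathrm{MC}_k(G')$ for $G'\sim\cG_{n,d}$. The paper imports this as a black box from an SDP-relaxation result of Coja-Oghlan (their Theorem~\ref{thm:sdp}), obtaining $\mathrm{MC}_k(G')\le(1-1/k)\tfrac{nd}{2}+\lambda n\sqrt d$ with probability $1-e^{-2n}$; you instead propose a direct first-moment argument (Azuma/McDiarmid on the pairing model plus a union bound over $k^n$ colorings), yielding the slightly weaker slack $O(\sqrt{\ln k/d})$. Your route is more elementary and self-contained, and for fixed $k$ gives the same $o_d(1)$ conclusion; the paper's citation buys a cleaner constant and avoids the concentration computation. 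Your handling of the angle dependence---bounding $1-\mu_{\level,d}(\beta,\gamma)$ uniformly over \emph{all} angles and then specializing to the bipartite optimum---is in fact a bit cleaner than the paper's, which goes through an auxiliary uniform bound (their Eq.~\eqref{eq:uniform_bound2}) and a separate Markov-inequality lemma (Lemma~\ref{lem:typical_kcut}) to locate a suitable $G_*$.
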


Our approach generally follows the basic idea of~\cite{farhi2020quantum} and also exploits the fact that QAOA is a local, and furthermore uniform algorithm.
In contrast to the main result of~\cite{farhi2020quantum}, which is expressed as an  upper bound on the expected approximation ratio (over the choice of graph) achieved by  QAOA, Theorem~\ref{thm:kcut_bound} shows that the approximation ratio is upper bounded for almost all graphs in the considered ensemble.
Analogous to the setting of~\cite{farhi2020quantum}, where it is shown that the approximation ratio converges to~$1/2$ in the  limit of large~$d$, we show that the limiting value is the approximation ratio achieved by random guessing.  In the analysis of~\cite{farhi2020quantum}, 
the explicitly known expected  maximal cut size for random $d$-regular graphs is used. The analogous value for MAX-$k$-CUT is not known for random $d$-regular graphs.
 Instead, we rely on an upper bound on the size of the maximum $k$-cut for typical $d$-regular random graphs obtained in the analysis of an SDP-relaxation  from~\cite{cojaoghlan}. In our analysis, we pay special attention to the graph-dependence of the optimal QAOA angles when computing QAOA approximation ratios: here we make use of the fact that the figure of merit (the average energy) for different typical graphs differs only by a negligible amount for any fixed angles.

\newcommand*{\nt}{n_{T_{\level,d}}}

Given a graph $G = (V,E)$, let $(i,j)=e\in E$ be an edge in $G$. We write $N_\level(e)$ to denote the $\level$-local neighborhood of $e$, i.e., the subgraph induced by the set of all vertices $h$ such that $d(h,i)\le \level$ or $d(h,j)\le \level$. Let $T_{\level,d}$ denote the tree such that all vertices except the leaves are $d$-regular and such that there exists an edge $(i,j)$ such that all leaves are distance $\level$ from either $i$ or $j$.
Let $\nt(G)$ denote the number of edges $e$ in a graph $G$ such that $N_\level(e) \not\cong T_{\level,d}$. The key insight of~\cite{farhi2020quantum} is the 
fact that for $\level$ sufficiently small compared to~$n$, it holds with high probability that almost all $\level$-local neighborhoods of a random $d$-regular graph look like $T_{\level,d}$. This is expressed by the following lemma, where we denote by $\cG_{n,d}$ the uniform distribution over $d$-regular graphs on $n$ vertices.
\begin{lemma}\label{lem:tree}\cite{farhi2020quantum}
Let $n,d$ and $A \in (0,1)$ be given. Suppose that
\begin{align}
    (d-1)^{2\level} < n^A\ .\label{eq:dlevelA}
\end{align}
Then there exists some constant $A' \in (A,1)$ such that
\begin{align}
    \ExpE_{G\sim\cG_{n,d}}\left[\nt(G)\right] = O(n^{A'})\ .
\end{align}
In particular, by Markov's inequality, 
\begin{align}
\Pr_{G\sim \cG_{n,d}}[\nt(G) \ge n^B] = O(n^{A'-B}) = o(1)\qquad\textrm{ for all } B\in (A',1)\ .\label{eq:inequalitystatementlem}
\end{align}
Moreover, the same result also holds for  the ensemble $\cG_{n,d}^{bi}$  of random $d$-regular bipartite graphs.
\end{lemma}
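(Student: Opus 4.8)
The plan follows the configuration-model argument of~\cite{farhi2020quantum}: bound $\ExpE_{G\sim\cG_{n,d}}[\nt(G)]$ by linearity of expectation over the $nd/2$ edges of $G$, estimating for a single edge the probability that its $p$-local neighborhood fails to be a tree, via a breadth-first exploration. Recall that a uniformly random $d$-regular multigraph is produced by placing $d$ half-edges at each of the $n$ vertices and drawing a uniformly random perfect matching on the $nd$ half-edges, and that conditioning on the resulting graph being simple gives $\cG_{n,d}$; since the simplicity probability is bounded below for bounded $d$ (and is at worst $\exp(-O(d^2))$ for $d=o(\sqrt n)$), it suffices to bound the expectation in the configuration model. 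Writing $\nt(G)=\sum_{e\in E}\mathbf{1}[N_p(e)\not\cong T_{p,d}]$, everything reduces to estimating $\Pr[N_p(e)\not\cong T_{p,d}]$ for a fixed edge $e=(i,j)$.

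For this, I would reveal the matching by a breadth-first exploration from $e$: expose the partners of the half-edges at $i$ and $j$, then of the half-edges at each newly discovered vertex, continuing until all vertices at distance $\le p$ from $\{i,j\}$ have had every half-edge exposed. If no exposed partner ever lands on a previously discovered vertex at distance $\le p$, then the induced subgraph on those vertices is exactly a copy of $T_{p,d}$, so $N_p(e)\cong T_{p,d}$; conversely, $N_p(e)\not\cong T_{p,d}$ forces such a ``collision'' --- either some vertex at distance $\le p$ is reached along two different paths, or $N_p(e)$ contains a cycle whose chord is detected as a collision. Now $T_{p,d}$ has $O((d-1)^p)$ vertices, so the exploration performs $O((d-1)^{p+1})$ half-edge exposures in total (the dominant contribution being the $d-1$ exposures from each of the $O((d-1)^p)$ depth-$p$ vertices); at every exposure the distance-$\le p$ vertices discovered so far carry $O(d(d-1)^p)$ half-edges, out of $nd-O(d(d-1)^p)=\Theta(nd)$ unmatched ones, because $(d-1)^{2p}<n^A=o(n)$ makes the explored region a vanishing fraction of all half-edges. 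Hence each exposure collides with probability $O((d-1)^p/n)$, and a union bound over the $O((d-1)^{p+1})$ exposures gives $\Pr[N_p(e)\not\cong T_{p,d}]=O((d-1)^{2p+1}/n)$.

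Summing over the $nd/2$ edges gives $\ExpE_{\cG_{n,d}}[\nt(G)]=O\big(d\,(d-1)^{2p+1}\big)$; since $p\ge1$ forces $(d-1)^2\le(d-1)^{2p}<n^A$ and hence $d=O(n^{A/2})$, this is $O(n^{A'})$ for a suitable $A'\in(A,1)$ in the parameter regimes considered in this paper, and dividing by the simplicity probability transfers the bound to $\cG_{n,d}$. The bipartite case runs identically using the bipartite configuration model ($n/2$ vertices per side, each with $d$ half-edges, and a uniform matching between the $nd/2$ left and $nd/2$ right half-edges): the local structure of a cycle-free neighborhood is still $T_{p,d}$ (a tree, hence bipartite), the exploration estimate is unchanged, and conditioning on simplicity gives the claim for $\cG_{n,d}^{bi}$. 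The ``in particular'' statement is then immediate from Markov's inequality: $\Pr[\nt(G)\ge n^B]\le \ExpE[\nt(G)]/n^B=O(n^{A'-B})=o(1)$ whenever $B\in(A',1)$.

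The main obstacle is ensuring the exponent $A'$ genuinely lies in $(A,1)$: the factors of $d$ in $d(d-1)^{2p+1}$ beyond $(d-1)^{2p}$ (one from exposing half-edges of the depth-$p$ boundary vertices, one from the number of edges $nd/2$) must be absorbed into $n^{A'}$ using the hypothesis $(d-1)^{2p}<n^A$, which caps $d$ at $O(n^{A/2})$ and therefore demands a definite margin between $A$ and $1$ --- for fixed $d$ these factors are constants and there is no issue, which is presumably the regime of the original statement. A related point requiring care when $d$ grows with $n$ is the passage from the configuration model to $\cG_{n,d}$: the crude estimate $1/\Pr[\text{simple}]=\exp(O(d^2))$ is harmless only for $d^2=O(\log n)$, so for larger degrees one should instead invoke sharper contiguity results for random regular graphs, or estimate $\nt(G)$ --- which is essentially governed by the number of short cycles through (or near) $e$ --- directly in the uniform model by the method of moments.
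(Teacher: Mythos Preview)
Your argument is correct and follows the configuration-model breadth-first exploration of~\cite{farhi2020quantum}, which is exactly what the lemma cites. The paper does not give its own proof; it only offers a one-sentence sketch invoking the Poisson distribution of the number of length-$\ell$ cycles in a random $d$-regular graph~\cite{WORMALD1981168} to bound the expected number of short cycles, from which the bound on $\nt(G)$ follows since any edge with $N_p(e)\not\cong T_{p,d}$ must lie within distance~$p$ of some cycle of length at most roughly~$2p+1$. Your direct BFS-collision argument and the paper's cycle-counting sketch are two standard and essentially interchangeable routes to the same estimate; yours is somewhat more self-contained and makes the dependence on~$d$ and~$p$ explicit, while the cycle-counting phrasing connects more cleanly to classical random-graph results and sidesteps the simplicity-conditioning issue you flag at the end (since the Poisson limit for short-cycle counts is stated directly for~$\cG_{n,d}$). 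Your caveats about absorbing the extra factors of~$d$ into~$n^{A'}$ and about contiguity for growing~$d$ are well taken and go beyond what the paper says.
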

The proof of Lemma~\ref{lem:tree} relies on the fact that the number of cycles of length $\ell$ in a random $d$-regular graph is Poisson distributed, allowing an upper bound to be established on the expected number of cycles below a certain length~\cite{WORMALD1981168}. This latter bound applies to both random $d$-regular graphs as well as random bipartite $d$-regular graphs. Both ensembles will be of relevance in the proof of Theorem~\ref{thm:kcut_bound}.

Let us now fix some $B \in (A',1)$. We will say that a graph $G$ on $n$ vertices is $T$-typical if $\nt(G) < n^{B}$. Lemma~\ref{lem:tree}  says that a random (possibly bipartite) $d$-regular graph~$G$ is $T$-typical with high probability.
Since QAOA is a local, and moreover uniform, algorithm, the performance of QAOA$_p$ on any given graph can be related to the performance of QAOA$_p$ on the induced $p$-neighborhood of each edge. Lemma~\ref{lem:tree} therefore suggests that to study QAOA$_p$ for generic $d$-regular graphs, it suffices to consider the behavior of QAOA$_\level$ on $T_{\level,d}$.

Let  now $C(G)$ be the cost function Hamiltonian for the MAX-$k$-CUT problem associated with $G$ (see Eq.~\eqref{eq:costfunctionhamiltonian}), and let $\psi(\theta)$ denote the level-$\level$ QAOA state defined by Eq.~\eqref{eq:psigammaqaoa}, where we use the shorthand $\theta=(\beta,\gamma)$ for the collection of all angles. 
 Then the expectation value for MAX-$k$-CUT QAOA$_p$ on $G=(V,E)$ can be written as
\begin{align}
\bra{\psi(\theta)}C(G)\ket{\psi(\theta)}&=\sum_{(i,j)\in E} \bra{\psi(\theta)} C_{i,j}\ket{\psi(\theta)}\ ,
\end{align}
where $C_{i,j}$ is the term in the Hamiltonian~\eqref{eq:costfunctionhamiltonian} acting non-trivially on both 
qudits $i$ and $j$. Note that due to the locality of the  QAOA unitary,  each of the individual terms  $\bra{\psi(\theta)} C_{i,j}\ket{\psi(\theta)}$ in this expression a function of the subgraph~$N_p(ij)$ and $\theta$.

Now, let $G$ be a $d$-regular $T$-typical graph. Each of the local terms $\bra{\psi(\theta)} C_{i,j}\ket{\psi(\theta)}$ is bounded above by $1$ since $C_{i,j}$ is a projection. Since for a $T$-typical graph~$G$, all but $\nt(G) < n^B$ of the edges satisfy $N_\level(e) \cong T_{\level,d}$, it follows that
\begin{align}
\bra{\psi(\theta)}C(G)\ket{\psi(\theta)} &= \left(\frac{nd}{2}-\nt(G)\right)C_T(\theta) + \sum_{(i,j):N_\level(ij)\not\cong T_{\level,d}} \bra{\psi(\theta)} C_{i,j}\ket{\psi(\theta)}\nonumber\\ 
&= \frac{nd}{2}C_T(\theta) + O(n^B)\ .\label{eq:universalbndctnb}
\end{align}
In this expression,  the function $C_T(\theta)= \bra{\psi(\theta)} C_{i',j'}\ket{\psi(\theta)}$ is the local term associated with any
one edge $(i',j')$ such that $N_\level(i'j') \cong T_{\level,d}$.  Importantly, the function $C_T(\cdot)$ depends on $(\level,d)$ only, i.e., it is universal for all $T$-typical graphs~$G$.

An important consequence is that the maximal expectation value achieved by level-$\level$ QAOA is roughly identical for all $T$-typical graphs $G$: if $G_1$ and $G_2$ are $T$-typical, then
\begin{align}
        \left|\max_\theta \bra{\psi(\theta)}C(G_1)\ket{\psi(\theta)} - \max_\theta \bra{\psi(\theta)}C(G_2)\ket{\psi(\theta)}\right| = O(n^B)\ .\label{eq:uniform_bound2}
\end{align}
This follows from the fact that 
\begin{align}
\big|\bra{\psi(\theta)}C(G_1)\ket{\psi(\theta)} - \bra{\psi(\theta)}C(G_2)\ket{\psi(\theta)}\big| = O(n^B)\qquad\textrm{ for any set of angles}~\theta\ \label{eq:gonetwotheta}
\end{align}
 as a consequence of~\eqref{eq:universalbndctnb}, and the inequality $\big|\|f\|_\infty - \|g\|_\infty\big| \le \|f-g\|_\infty$. Specializing
Eq.~\eqref{eq:gonetwotheta} to $\theta=\arg\max_{\theta }\bra{\psi(\theta)}C(G_1)\ket{\psi(\theta)}$
also shows that  a choice of optimal angles  for any given $T$-typical graph~$G_1$ will also be nearly optimal for all other $T$-typical graphs~$G_2$.

We are going to establish an upper bound on the performance of QAOA on $T$-typical graphs. To this end, we use the following upper bound on the typical size of the maximum $k$-cut of a graph~$G$ in the ensemble~$\cG_{n,d}$ of $d$-regular graphs.  Then the following holds:
\begin{theorem}\cite[Theorem~19]{cojaoghlan}\label{thm:sdp}
There exists constants $\lambda,\zeta>0$ such that
\begin{align}
    \Pr_{G\sim\cG_{n,d}}\left[\mathrm{MC}_k(G) \le \left(1-\frac{1}{k}\right)\frac{nd}{2} + \lambda n\sqrt{d}\right] \ge 1 - e^{-2n}
\end{align}
for all degrees $d$ satisfying $d \geq \zeta$ and $d=o(\sqrt{n})$.
\end{theorem}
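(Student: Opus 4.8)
The plan is to pass to the standard semidefinite (vector) relaxation of MAX-$k$-CUT and bound its value by the bottom eigenvalue of the adjacency matrix. Concretely, I would assign to each vertex $i \in V$ a unit vector $v_i \in \mathbb{R}^n$ and consider
\begin{align*}
\mathrm{SDP}(G) \;=\; \max_{\|v_i\| = 1 \ \forall i}\ \frac{k-1}{k}\sum_{(i,j)\in E}\bigl(1 - \langle v_i, v_j\rangle\bigr)\, .
\end{align*}
Embedding a genuine coloring $x \in \mathbb{Z}_k^n$ through the simplex vectors $e_1,\dots,e_k$ (unit vectors with $\langle e_a, e_b\rangle = -1/(k-1)$ for $a \ne b$, padded into $\mathbb{R}^n$) and noting $\tfrac{k-1}{k}(1 - \langle e_{x_i}, e_{x_j}\rangle) = 1 - \delta_{x_i,x_j}$ shows $C(x) \le \mathrm{SDP}(G)$, hence $\mathrm{MC}_k(G) \le \mathrm{SDP}(G)$. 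Writing $M_{ij} = \langle v_i, v_j\rangle$, which is positive semidefinite with unit diagonal, and $A = A(G)$ for the adjacency matrix, one has $\sum_{(i,j)\in E}\langle v_i, v_j\rangle = \tfrac12\mathrm{Tr}(AM) \ge \tfrac12\lambda_{\min}(A)\,\mathrm{Tr}(M) = \tfrac n2\lambda_{\min}(A)$; since $|E| = nd/2$ and the prefactor $\tfrac{k-1}{k}$ pulls out of the whole objective, this yields the purely \emph{deterministic} inequality
\begin{align*}
\mathrm{MC}_k(G) \;\le\; \Bigl(1 - \tfrac1k\Bigr)\frac{nd}{2} \;+\; \Bigl(1 - \tfrac1k\Bigr)\frac n2\,\bigl|\lambda_{\min}(A(G))\bigr|\, .
\end{align*}

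This reduces the theorem to a purely spectral statement: it suffices to find an absolute constant $c$ and a threshold $\zeta$ such that $\Pr_{G\sim\cG_{n,d}}\bigl[\,|\lambda_{\min}(A(G))| \ge c\sqrt d\,\bigr] \le e^{-2n}$ for all $\zeta \le d = o(\sqrt n)$; one then sets $\lambda = (1 - 1/k)c/2 \le c/2$. For this I would invoke the by-now standard non-trivial-spectrum bounds for random regular graphs, proved either by the trace method applied to $\mathbb{E}\,\mathrm{Tr}\bigl((A - \tfrac dn J)^{2\ell}\bigr)$ with $\ell$ a slowly growing power of $\log n$ (controlling closed non-backtracking walks), or -- in order to get the exponentially small failure probability directly -- by the Kahn--Szemer\'edi $\varepsilon$-net argument, which bounds $\sup\{\,u^\top A w : \|u\| = \|w\| = 1,\ u\perp\mathbf 1\,\}$ by $O(\sqrt d)$ outside an event of probability $e^{-\Omega(n)}$. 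The estimate is first proved in the configuration (pairing) model and then transferred to the uniform simple $d$-regular model by dividing by $\Pr[\text{pairing simple}] = e^{-o(n)}$, which is precisely the step that uses the hypothesis $d = o(\sqrt n)$. An equivalent route skips the SDP entirely: bound $\mathrm{MC}_k(G)$ by a first moment over the $k^n$ partitions of $V$; in the pairing model the $k$-cut induced by a partition into parts of sizes $n_1,\dots,n_k$ has mean $\tfrac{nd}2\bigl(1 - \sum_\ell (n_\ell/n)^2\bigr) + O(d) \le (1 - 1/k)\tfrac{nd}2 + O(d)$, concentrates with Gaussian tails of width $O(\sqrt{nd})$ by bounded differences over the $nd/2$ matched pairs, and a union bound costs only $e^{n\ln k}$, which is absorbed by the deviation probability $e^{-\Omega(\lambda^2 n)}$ once $\lambda$ is a large enough ($k$-dependent) constant.

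The relaxation and the trace inequality are routine; the real work -- and the only place the quantitative hypotheses enter -- is the concentration step. The main obstacle is that one needs the spectral (equivalently, first-moment) bound to fail with probability at most $e^{-2n}$, which is stronger than the $o(1)$ failure in the usual statement of Friedman's theorem, and one needs it uniformly over the whole window $\zeta \le d = o(\sqrt n)$; this requires care in tuning the exponent $2\ell$ (or the net parameters) as a function of $(n,d)$, in the configuration-model transfer, and in absorbing the $O(\sqrt d)$ error together with all lower-order corrections into the single constant $\lambda$. Since $\lambda$ and $\zeta$ may depend on $k$, there is ample slack for the union bound / net argument to close.
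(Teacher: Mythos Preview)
The paper does not prove this theorem at all: it is quoted verbatim as \cite[Theorem~19]{cojaoghlan} and used as a black box in the argument leading to Theorem~\ref{thm:kcut_bound}. So there is no ``paper's own proof'' to compare your proposal against. The paper does, however, describe the cited result as ``obtained in the analysis of an SDP-relaxation,'' which is exactly the route you take in your first paragraph: relax $\mathrm{MC}_k$ to the standard vector program, rewrite the objective as $(1-1/k)\bigl(|E|-\tfrac12\mathrm{Tr}(AM)\bigr)$, and bound $\mathrm{Tr}(AM)\ge n\,\lambda_{\min}(A)$ to reduce everything to a spectral tail bound on $|\lambda_{\min}(A(G))|$ for $G\sim\cG_{n,d}$. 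In that sense your proposal is aligned with the source the paper invokes.

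Your sketch is sound at the level of a plan. The deterministic chain $\mathrm{MC}_k(G)\le\mathrm{SDP}(G)\le (1-1/k)\tfrac{nd}{2}+(1-1/k)\tfrac{n}{2}|\lambda_{\min}(A)|$ is correct, and it has the pleasant feature that the resulting constant $\lambda=(1-1/k)\,c/2\le c/2$ is uniform in $k$, so the first-moment alternative you mention (which would force a $k$-dependent $\lambda$ via the $k^n$ union bound) is strictly weaker and not needed. The only substantive step is the one you flag yourself: an $e^{-\Omega(n)}$ tail for $\{|\lambda_{\min}(A)|>c\sqrt d\}$ uniformly in $\zeta\le d=o(\sqrt n)$. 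Friedman's theorem as usually stated gives only $1-o(1)$, so one does need the Kahn--Szemer\'edi net argument (or a comparably robust method) together with the configuration-model transfer, and the hypothesis $d=o(\sqrt n)$ enters precisely to keep $\Pr[\text{simple}]=e^{-o(n)}$ from swallowing the exponential gain. That is the genuine work, and you have correctly localized it; filling it in amounts to reproducing the relevant portion of the cited reference rather than anything in the present paper.
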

This implies that for all degrees $d$ satisfying the hypotheses of Theorem~\ref{thm:sdp}, we have
\begin{align}
    \ExpE_{G\sim \cG_{n,d}}[\mathrm{MC}_k(G)] \le \left(1-\frac{1}{k}\right)\frac{nd}{2} + \lambda n\sqrt{d} + o(n)\ .\label{eq:kcut_bound}
\end{align}

\begin{lemma}\label{lem:typical_kcut}
Let $n,d$ satisfy the hypotheses of Theorem~\ref{thm:sdp} and let $\level$, $A$ be such that Eq.~\eqref{eq:dlevelA} holds. Then there exists a constant $B'\in (0,1)$ such that the following holds  for sufficiently large $n$: There exists a $T$-typical $d$-regular graph~$G$ satisfying
\begin{align}
    \mathrm{MC}_k(G) < \ExpE_{G\sim \cG_{n,d}}[\mathrm{MC}_k(G)] + n^{B'}.\label{eq:inequalitytoprove}
\end{align}
\end{lemma}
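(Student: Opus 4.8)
The plan is to combine the two probabilistic statements we already have in hand — Theorem~\ref{thm:sdp} (the SDP upper bound on $\mathrm{MC}_k(G)$, which holds with probability $\ge 1 - e^{-2n}$) and Lemma~\ref{lem:tree} (the $T$-typicality statement, which holds with probability $1 - o(1)$) — and show that their intersection is non-empty, and in fact that a graph in the intersection satisfies the desired bound~\eqref{eq:inequalitytoprove}. The point is that we do not need a \emph{random} graph to satisfy everything simultaneously; we just need to exhibit \emph{one}, so a first-moment / union-bound argument suffices.

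First I would fix the constant $B \in (A',1)$ as in the discussion preceding the lemma, so that ``$T$-typical'' means $\nt(G) < n^B$, and by Lemma~\ref{lem:tree} we have $\Pr_{G \sim \cG_{n,d}}[G \text{ not } T\text{-typical}] = O(n^{A'-B}) = o(1)$. Next, call a graph \emph{SDP-good} if $\mathrm{MC}_k(G) \le (1-1/k)\tfrac{nd}{2} + \lambda n \sqrt{d}$; by Theorem~\ref{thm:sdp}, $\Pr_{G\sim\cG_{n,d}}[G \text{ not SDP-good}] \le e^{-2n}$. By the union bound, the probability that $G$ is either not $T$-typical or not SDP-good is at most $o(1) + e^{-2n} = o(1) < 1$ for sufficiently large $n$. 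Hence for sufficiently large $n$ there exists a $d$-regular graph $G$ that is simultaneously $T$-typical and SDP-good.

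It then remains to check that such a $G$ satisfies~\eqref{eq:inequalitytoprove} for a suitable constant $B' \in (0,1)$. Since $G$ is SDP-good, $\mathrm{MC}_k(G) \le (1-1/k)\tfrac{nd}{2} + \lambda n\sqrt{d}$, while by~\eqref{eq:kcut_bound} we have $\ExpE_{G\sim\cG_{n,d}}[\mathrm{MC}_k(G)] \ge (1-1/k)\tfrac{nd}{2} - o(n)$ — indeed one even has the trivial lower bound $\ExpE[\mathrm{MC}_k(G)] \ge (1-1/k)\tfrac{nd}{2}$ from random coloring, so we can drop the $o(n)$ and just use that. Subtracting, $\mathrm{MC}_k(G) - \ExpE_{G\sim\cG_{n,d}}[\mathrm{MC}_k(G)] \le \lambda n\sqrt{d}$. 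Since $d = o(\sqrt{n})$, we have $\sqrt{d} = o(n^{1/4})$, so $\lambda n \sqrt{d} = o(n^{5/4})$; more simply, $\lambda n \sqrt{d} \le \lambda n \cdot n^{1/4} = \lambda n^{5/4}$, which is $\le n^{B'}$ for any $B' \in (5/4 + \epsilon, \dots)$ — wait, that exceeds $1$. The fix is that we actually want $B' \in (0,1)$, so I would instead note that $d \ge \zeta$ is merely bounded below, and the relevant regime is $d$ constant or slowly growing; in fact for the statement to make sense with $B' < 1$ we should read $d$ as fixed (a constant), in which case $\lambda n \sqrt{d} = O(n)$ and we can take any $B' \in (\max\{A', 1/2\}, 1)$, say, absorbing the $O(n)$ term; alternatively one replaces $n^{B'}$ by $C n$ and the lemma holds as stated with the understanding that $B'$ can be taken arbitrarily close to $1$ once $d = O(1)$. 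I would state the cleanest version: with $d$ treated as a constant, $\mathrm{MC}_k(G) - \ExpE[\mathrm{MC}_k(G)] \le \lambda n \sqrt d = O(n)$, so~\eqref{eq:inequalitytoprove} holds with, e.g., $B' = \max\{A', 1/2\} < 1$ for $n$ large enough that $n^{B'}$ dominates the constant times $n$ — which requires $B' $ arbitrarily close to $1$; the honest statement is that $\mathrm{MC}_k(G) < \ExpE[\mathrm{MC}_k(G)] + O(n)$, and one sets $n^{B'}$ accordingly.

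The main obstacle is really bookkeeping rather than mathematics: reconciling the exponents so that the gap $\lambda n \sqrt{d}$ fits under a single power $n^{B'}$ with $B' < 1$, which forces one to be careful about whether $d$ is constant or growing — in the growing regime $d = o(\sqrt n)$ one gets a gap of size $o(n^{5/4})$, which is \emph{not} $o(n)$, so the clean bound $n^{B'}$ with $B'<1$ only holds for bounded $d$; for growing $d$ one should carry the explicit $\lambda n \sqrt d$ term, which is exactly what happens downstream in the proof of Theorem~\ref{thm:kcut_bound} where the $o_d(1)$ correction appears. I would therefore phrase the lemma's conclusion to keep the $\lambda n \sqrt{d}$ term visible (or equivalently take $B'$ to depend mildly on the regime), rather than over-optimizing the constant. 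Everything else is a one-line union bound plus the random-coloring lower bound on the expectation.
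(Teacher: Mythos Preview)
Your union-bound idea is sound and does produce a $T$-typical graph that is also ``SDP-good,'' but the final arithmetic does not yield the lemma as stated. For such a graph you obtain
\[
\mathrm{MC}_k(G)-\ExpE_{G\sim\cG_{n,d}}[\mathrm{MC}_k(G)]\;\le\;\lambda n\sqrt{d},
\]
and even for fixed $d$ this is $\Theta(n)$, which is \emph{never} dominated by $n^{B'}$ for any $B'<1$. Your attempted fixes (``for $n$ large enough that $n^{B'}$ dominates the constant times $n$'') cannot work for this reason; you correctly flag the issue but do not resolve it. So as a proof of Lemma~\ref{lem:typical_kcut} with $B'\in(0,1)$, the argument has a genuine gap.

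The paper closes this gap by a different mechanism: instead of invoking Theorem~\ref{thm:sdp} to control $\mathrm{MC}_k(G)$, it applies Markov's inequality directly to the nonnegative random variable $\mathrm{MC}_k(G)$, giving
\[
\Pr_{G\sim\cG_{n,d}}\bigl[\mathrm{MC}_k(G)\ge E+n^{B'}\bigr]\;\le\;\frac{E}{E+n^{B'}}\;=\;1-\Theta(n^{B'-1}),
\]
using only $E=\Theta(n)$. Combined by union bound with the $T$-typicality failure probability $O(n^{A'-B})$, the total bad probability is $<1$ precisely when $A'-B<B'-1$, so any $B'\in(1+A'-B,\,1)$ works. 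The point is that Markov alone already guarantees the existence of graphs within $n^{B'}$ of the mean for any $B'>0$; no explicit upper bound on $\mathrm{MC}_k(G)$ is needed at this step.

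That said, your observation at the end is correct: the weaker conclusion $\mathrm{MC}_k(G)<E+\lambda n\sqrt{d}$ that your argument \emph{does} establish would still suffice for Theorem~\ref{thm:kcut_bound}, since after dividing by $nd/2$ the extra $\lambda n\sqrt{d}$ contributes only another $o_d(1)$ term. So your route proves a serviceable variant of the lemma, just not the lemma as written.
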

\begin{proof}
Let $\cT$ denote the set of $T$-typical graphs, and let us abbreviate \\$E = \ExpE_{G\sim \cG_{n,d}}[\mathrm{MC}_k(G)]$.
We show that a randomly chosen $d$-regular graph~$G\sim \cG_{n,d}$ satisfies the desired properties (i.e., $G\in\cT$ and Eq.~\eqref{eq:inequalitytoprove}) with non-zero probability. Indeed, we have  by the union bound
\begin{align}
\Pr_{G\sim \cG_{n,d}}\left[\cG\not\in\cT \textrm{ or } \mathrm{MC}_k(G)\ge E+n^{B'}\right]&\leq 
 \Pr_{G\sim \cG_{n,d}}[G \not\in \cT]+\Pr_{G\sim \cG_{n,d}}\left[\mathrm{MC}_k(G)\ge E+n^{B'}\right]\ .\label{eq:gnottmcg}
\end{align}
We show that sum on the rhs is strictly less than~$1$ for a suitable choice of $B'\in (0,1)$ in the limit $n\rightarrow\infty$, implying the claim. Indeed, by 
 Eq.~\eqref{eq:inequalitystatementlem}  of
Lemma~\ref{lem:tree}, there is some $A'\in (A,1)$ such that 
\begin{align}
\Pr_{G\sim \cG_{n,d}}[G \not\in \cT]&=O(n^{A'-B})\qquad\textrm{whenever }\qquad B\in (A',1)\ .\label{eq:ggnd}
\end{align}
On the other hand, we have with Markov's inequality
\begin{align}
\Pr_{G\sim \cG_{n,d}}\left[\mathrm{MC}_k(G)\ge E+n^{B'}\right]&=\frac{1}{1+n^{B'}/E}\leq 1-\frac{1}{2} n^{B'}/E 
\end{align}
by Taylor series expansion. Note that the latter is well-defined, i.e., $n^{B'}/E \rightarrow 0$ for $n\rightarrow\infty$ for all $B'\in (0,1)$, since we have $E = \Omega(n)$. This is because a max $k$-cut is at least as large as a max $2$-cut, and the expected max $2$-cut scales as $\Theta(n)$. Inserting~\eqref{eq:kcut_bound}, we conclude that
\begin{align}
\Pr_{G\sim \cG_{n,d}}\left[\mathrm{MC}_k(G)\ge E+n^{B'}\right]&\leq 1-\frac{1}{2}\left[\left(\left(1-\frac{1}{k}\right)\frac{d}{2} + \lambda\sqrt{d}\right)^{-1}n^{B'-1} + o(n^{B'-1})\right]\ .\label{eq:lastxn}
\end{align}
Combining Eqs.~\eqref{eq:gnottmcg},~\eqref{eq:ggnd},~\eqref{eq:lastxn} and comparing exponents of~$n$, we conclude that 
\begin{align}
\Pr_{G\sim \cG_{n,d}}\left[\cG\not\in\cT \textrm{ or } \mathrm{MC}_k(G)\ge E+n^{B'}\right]<1\quad&\textrm{ for sufficiently large }n \\ &\textrm{ whenever } A'-B<B'-1.
\end{align}
The Lemma therefore holds for any choice of $B'\in (1-B+A',1)$. 
\end{proof}
Now, let us fix a $T$-typical graph $G_*$ from Lemma~\ref{lem:typical_kcut} with some chosen constant $B'$. Then we have
\begin{align}
    \max_\theta\bra{\psi(\theta)}C(G_*)\ket{\psi(\theta)} \le \mathrm{MC}_k(G_*) <  \ExpE_{G\sim \cG_{n,d}}[\mathrm{MC}_k(G)] + n^{B'}.
\end{align}
The first inequality follows from the fact that the expected value of the MAX-$k$-CUT cost function for colorings returned by QAOA cannot exceed the actual maximum. From inequality~\eqref{eq:kcut_bound}, we then have the bound
\begin{align}
    \max_\theta\bra{\psi(\theta)}C(G_*)\ket{\psi(\theta)} \le \left(\left(1-\frac{1}{k}\right)\frac{d}{2} + \lambda\sqrt{d}\right)\cdot n + o(n).\label{eq:kcut_expectation_bound}
\end{align}
From the uniform bound Eq.~\eqref{eq:uniform_bound2} between the maxima of $T$-typical graphs, the same bound holds for
any $T$-typical graph~$G$ with the same $n$ and $d$ as $G_*$, that is
\begin{align} \left|\max_\theta \bra{\psi(\theta)}C(G)\ket{\psi(\theta)} - \max_\theta \bra{\psi(\theta)}C(G_*)\ket{\psi(\theta)}\right| = o(n). \end{align}

To obtain a bound on the approximation ratio, we can focus our attention on $d$-regular bipartite graphs, which are guaranteed to have a maximum $k$-cut of size $nd/2$ for any $k$. 
Since Lemma~\ref{lem:tree} applies equally to random bipartite graphs, so that a random bipartite $d$-regular graph is also $T$-typical with  probability~$1-o(1)$.
Letting $G$ be a $T$-typical bipartite graph with $n$ and $d$ satisfying the hypothesis of Theorem~\ref{thm:sdp}, the bound~\eqref{eq:kcut_expectation_bound} holds for $G$. Dividing through by $nd/2$, the approximation ratio $\alpha_p^{\mathrm{MC}_k}(G)$ is therefore bounded above by
\begin{align}
    \alpha_p^{\mathrm{MC}_k}(G) = \frac{2}{nd}\max_\theta\bra{\psi(\theta)}C(G)\ket{\psi(\theta)} \le \left(1-\frac{1}{k}\right) + o_d(1) + o_n(1).
\end{align}
This concludes the proof of Theorem~\ref{thm:kcut_bound}.

\section{Classical simulation of level-$1$ RQAOA}
\label{sec:classSimQAOA}

A polynomial-time classical algorithm for computing expectation values \\ $\bra{\psi(\beta,\gamma)}Z_jZ_k\ket{\psi(\beta,\gamma)}$ for level-$1$ QAOA states~$\ket{\psi(\beta,\gamma)}$ and the MAX-CUT cost function was given by Wang et al.~\cite{zhihui2018}. Our work~\cite{brakoeklietan} describes a generalized version of this algorithm applicable to any Ising-type cost function.

In this section, we consider the $k$-coloring cost function Hamiltonian~\eqref{eq:costfunctionhamiltonian} and more generally Hamiltonians of the form~\eqref{eq:costfunctionhamiltonian} and give a classical algorithm for computing expectation values $\bra{\psi}Z_u^rZ_v^s\ket{\psi}$, where $\ket{\psi}=\ket{\psi(\beta,\gamma)}$ is the level-$1$ QAOA state defined in Eq.~\eqref{eq:psigammaqaoa} and $Z_u$ is the generalized Pauli-$Z$ operator acting on a qudit~$u\in [n]$. 
The algorithm has runtime $O(k^5(d_u+d_v))$,
where $d_j$ is the degree of a vertex $j$.
For a constant number of colors~$k$, this scales at most linearly with~$n$. For constant-degree graphs, the computation requires a constant amount of time.

A natural application of our algorithm is finding optimal
angles $(\beta,\gamma)$ for level-$1$ QAOA states. Indeed, since the 
variational energy is a linear combination of the expected
values $\bra{\psi}Z_u^rZ_v^s\ket{\psi}$, it can be efficiently computed classically using our algorithm.
This eliminates the need to prepare the variational state $\ket{\psi}$
on a quantum device for each intermediate choice of the
angles $(\beta,\gamma)$ in the energy optimization
subroutine. 
 In fact, the technique discussed here can easily be adapted to also efficiently compute partial derivatives
$\frac{\partial}{\partial \gamma}\bra{\psi}Z_u^rZ_v^s\ket{\psi}$ and $\frac{\partial}{\partial \beta_j}\bra{\psi}Z_u^rZ_v^s\ket{\psi}$,
thus sidestepping the need of using a quantum device even if the latter is used to estimate the gradient of the cost function.
However, once the optimal angles $(\beta,\gamma)$ are found,
a quantum device would be needed to prepare the optimal variational state and perform a measurement to obtain
a classical solution of the problem. 
Since the final measurement is not used in the
recursive version of QAOA, our algorithm enables
efficient classical simulation of level-$1$ RQAOA
in its entirety.

\subsection{General algorithm for level-$1$ QAOA-type expectation values\label{sec:generalalgorithmlevelone}}
 Our formulation extends beyond our QAOA Ansatz for MAX-$k$-CUT and could be used in other contexts. Specifically, let $C$ be a $2$-local Hamiltonian on $n$ $k$-dimensional qudits given by
\begin{align}
C&=\sum_{1\leq u<v\leq n} C_{u,v}\ ,\label{eq:hamiltonianpqdef}
\end{align}
where we assume that the terms $C_{u,v}$ and $C_{u',v'}$ 
acting on different qudits $u<v$ and $u'<v'$ commute pairwise. 
We will also write $C_{v,u}\equiv \mathsf{SWAP} C_{u,v}\mathsf{SWAP}^\dagger$ for the interaction term $C_{u,v}$ acting on qudits $(v,u)$ (in this order) when $u<v$. 
 Let $\ket{+}\in\mathbb{C}^k$ be a qudit state.  We are interested in generalized QAOA states of the form
 \begin{align}
 \ket{\Psi(\beta,\gamma)} &= B(\beta)^{\otimes n} e^{i\gamma C}\ket{+}^{\otimes n}\label{eq:defpsibetagammap}
 \end{align}
 where $B(\beta):\mathbb{C}^k\rightarrow\mathbb{C}^k$ is a unitary on~$\mathbb{C}^k$ for any $\beta\in\mathbb{R}^k$. We assume that $B(-\beta)=B(\beta)^\dagger$ is the adjoint of $B(\beta)$. The states defined by Eq.~\eqref{eq:psigammaqaoa} are a special case.
 
Let $O$ be any two-qudit observable. Our goal is to compute the expectation value
 \begin{align}
 \mu_{u,v}(O)&= \bra{\psi(\beta,\gamma)}O_{u,v}\ket{\psi(\beta,\gamma)}\ ,\label{eq:muuvO}
 \end{align} 
 where the subscripts $u,v$ indicate the pair of qudits acted upon by $O$. Define the density matrix
 \begin{align}
 \rho_{u,v}&=\tr_{w\not\in \{u,v\}}\left(e^{i\gamma C}\proj{+}^{\otimes n}e^{-i\gamma C}\right)\ .
 \end{align}
 One can compute $\rho_{u,v}$ in time $O(n)$ by initializing the pair of qudits $u,v$ in the state
 $e^{i\gamma H_{u,v}}\ket{+}^{\otimes 2}$ and sequentially coupling $\{u,v\}$ to each of the remaining qudits~$w$ in the following way: qubit~$w$ is initialized in the state~$\ket{+}$ and then coupled to $\{u,v\}$ by the respective terms in the cost function~$C$, namely~$e^{i\gamma(C_{u,w}+C_{v,w})}$. Finally, $w$ is traced out. 

The algorithm for computing $\rho_{u,v}$ can be summarized as follows:
 \begin{mdframed}[
    linecolor=black,
    linewidth=2pt,
    roundcorner=4pt,
    backgroundcolor=gray!15,
    userdefinedwidth=\textwidth,
]
\begin{algorithmic}
 \STATE $\eta\gets e^{i\gamma C_{u,v}}\proj{+}^{\otimes 2}e^{-i\gamma C_{u,v}}$
\FOR {$w\in [n]\backslash \{u,v\}$} 
        \STATE $\eta \gets \cE_w(\eta)$
\ENDFOR
\RETURN $\eta$
\end{algorithmic}
\end{mdframed}
 where $\cE_w$ is a two-qudit quantum channel defined as
 \begin{align}
 \cE_w(\eta)&=\tr_2 \left[e^{i\gamma (C_{u,w}+C_{v,w})} (\eta\otimes \proj{+})e^{-i\gamma (C_{u,v}+C_{v,w})})\right]\ .\label{eq:ewcomput} 
 \end{align}
 The final state $\rho_{u,v}$ involves $n-2$ applications of~$\cE_w$ in general. 
 By restriction to $w$ which interact non-trivially with $\{u,v\}$, this can be reduced to fewer than $d_u+d_v$ applications, where $d_u$ is the degree of $u$ in the interaction graph associated with $C$ (i.e., $(u,w)$ is an edge if and only if $C_{u,w}\neq 0$). This improvement applies for example to the MAX-$k$-CUT cost function Hamiltonian~\eqref{eq:costfunctionhamiltonian} for a bounded-degree graph.

 Finally, the definition~\eqref{eq:defpsibetagammap} of $\ket{\psi(\beta,\gamma)}$ together with the 
 assumed commutativity of the terms $C_{u,v}$ in the Hamiltonian give that the expectation value~\eqref{eq:muuvO} can be computed according to
 \begin{align}
 \mu_{u,v}(O)&=\tr\left(\rho_{u,v}B(-\beta)^{\otimes 2}O_{u,v}B(\beta)^{\otimes 2}\right)\ .\label{eq:rhouvbbetacomputation}
 \end{align}
 This computation is illustrated in  Fig.~\ref{fig:manipulations}. To find the $k$-dependence of the overall complexity of this algorithm,  we need to consider the evaluation of the superoperators~$\cE_w$ defined by~\eqref{eq:ewcomput} and the expression~\eqref{eq:rhouvbbetacomputation} for the problem at hand.

\begin{figure}
\begin{center}
   \subfloat[The  expression of interest.]{\includegraphics[height=2cm, width=.6 \textwidth]{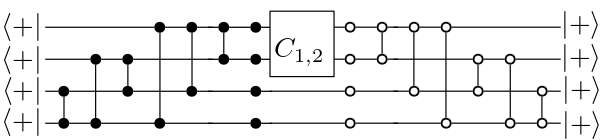}}\\ 
     \subfloat[Single-qubit unitaries $B(\beta)$ not acting on the qudits $\{1,2\}$ cancel. ]{\includegraphics[height=2cm, height=2cm, width=.6\textwidth]{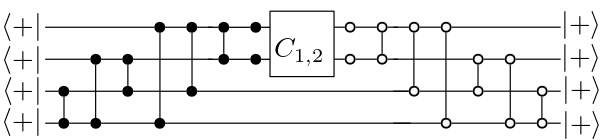}}\\
   \subfloat[The two unitaries
   $e^{\pm i\gamma C_{1,2}}$  can be moved to the left and right. 
   Unitaries    $e^{\pm i\gamma C_{p,q}}$ acting on qudits    $p,q\not\in \{1,2\}$ can be moved to the center. 
    ]{\includegraphics[height=2cm, width=.6\textwidth]{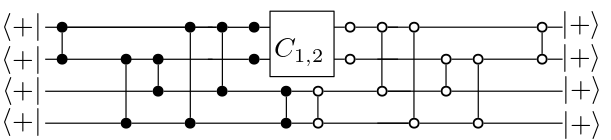}}\\
   \subfloat[where they cancel with their adjoints.]{\includegraphics[height=2cm, width=.6\textwidth]{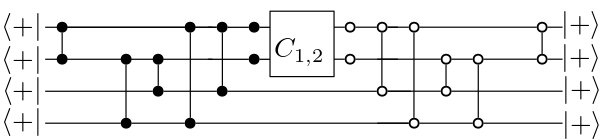}}\\
      \subfloat[Again using commutativity of the operators $C_{p,q}$, the remaining unitaries can be grouped
      into pairs of the form $e^{\pm i\gamma (C_{1,w}+C_{2,w})}$ where $w\not\in\{1,2\}$.]{\includegraphics[height=2cm, width=.6\textwidth]{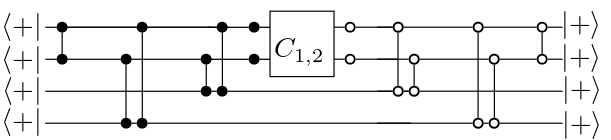}}\\
   \subfloat[The resulting expression can be interpreted as the result of applying a sequence of superoperators~$\cE_w$ to the state $e^{i\gamma C_{1,2}}\proj{+}^{\otimes 2}e^{-i\gamma C_{1,2}}$, and then taking the expectation of $B(-\beta)^{\otimes 2}C_{1,2}B(\beta)^{\otimes 2}$.]{\includegraphics[height=2cm, width=.6\textwidth]{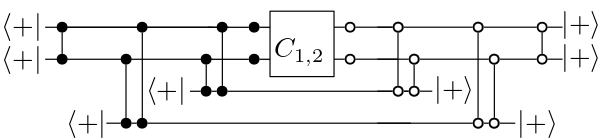}}      
   \caption{An expression 
   of the form~$\bra{\psi(\beta,\gamma)}C_{u,v} \ket{\psi(\beta,\gamma)}$.  Here we assumed that $n=4$ and $(u,v)=(1,2)$. Single-qudit operators of the form $B(\beta)$ are represented by circles, and their adjoints $B(\beta)^\dagger$ by filled circles. Similarly, gates of the form $e^{-i\gamma C_{p,q}}$ respectively $e^{i\gamma C_{p,q}}$ for qudits $p<q$ are represented by connecting lines between the corresponding qudits with empty respectively filled circles.}\label{fig:manipulations}
\end{center}
\end{figure}

\subsection{Simulating QAOA$_1$ for MAX-$k$-CUT}
Here we specialize the algorithm from Section~\ref{sec:generalalgorithmlevelone} to the MAX-$k$-CUT problem. For this problem,  $B(\beta)$ is given by Eq.~\eqref{eq:Deltadef}
whereas $C$ (cf.~Eq.~\eqref{eq:costfunctionhamiltonian})  is a diagonal $2$-local Hamiltonian commuting with $X^{\otimes n}$. The most general diagonal $2$-local Hamiltonian~$C$ commuting with~$X^{\otimes n}$ has the following form: Let $\{h(a)\}_{a\in\mathbb{Z}_k}$ be a family of $n\times n$-matrices satisfying
\begin{align}
\overline{h_{u,v}(r)}=h_{v,u}(r)=h_{u,v}(-r)\label{eq:huvradj}
\end{align}
for all $u,v\in [n]$ and $r\in\mathbb{Z}_k$.
Then $C$ can be written as
\begin{align}
C=\sum_{1 \leq u<v \leq n} C_{u, v}\ , \qquad C_{u, v}=\sum_{a \in \mathbb{Z}_{k}} h_{u, v}(a) Z_{u}^{a} Z_{v}^{-a}=\sum_{a \in \mathbb{Z}_{k}} h_{v,u}(a) Z_{v}^{a} Z_{u}^{-a}\ .\label{eq:hpqrdef}
\end{align}
From Eq.~\eqref{eq:huvradj} one gets $C_{v,u}=C_{u,v}$ and $C_{u,v}^\dagger=C_{u,v}$. 
Let us agree that $C_{u, u}=0$ for all $u$.  
 The Hamiltonian Eq.~\eqref{eq:costfunctionhamiltonian} takes the form~\eqref{eq:hpqrdef} with 
$h_{u,v}(a)=\frac{1}{k}\sum_{b\in\mathbb{Z}_k}J_{u,v}(b)\omega^{a b}$ 
since 
\begin{align}
    \Pi_{u,v}(b)=\frac{1}{k}\sum_{a\in\mathbb{Z}_k}\omega^{a b}Z_u^a Z_v^{-a}\ .\label{eq:ftpiuvb}
\end{align}
Note that
\begin{align}
\bra{a,b}e^{-i \gamma C_{p, q}}\ket{a,b}=\exp \left[-i \gamma \hat{h}_{p, q}(a-b)\right]\  \label{eq:matrixelementdiagonal} 
\end{align}
where the
Fourier transform $\hat{h}(b)$
of $h$ is defined as a hermitian $n\times n$ matrix
with entries
\begin{align}
\hat{h}_{p,q}(b)=\sum_{a \in \mathbb{Z}_{k}} h_{p, q}(a) \omega^{a b}\ ,\qquad \omega \equiv e^{2\pi i/k}\ .
\end{align}
A straightforward computation using~\eqref{eq:matrixelementdiagonal} then gives that the superoperator~$\mathcal{E}_w$ (cf. Eq.~\eqref{eq:ewcomput}) is equal to
\begin{align}
\mathcal{E}_w(\eta)=\frac{1}{k} \sum_{a \in \mathbb{Z}_{k}} D_w(a) \eta D_w(a)^{\dagger}
\end{align}
with diagonal Kraus operators
\begin{align}
 D_w(a)\ket{c,d}=\exp \left[-i \gamma \hat{h}_{u,w}(c-a)-i \gamma \hat{h}_{v,w}(d-a)\right]\ket{c,d}\ . \label{eq:diagop}
\end{align}
A general two-qudit mixed state can be classically described by a Hermitian matrix of size $k^2\times k^2$. Multiplying this matrix by a diagonal Kraus operator~$D_w(a)$ takes time~$O(k^4)$. Thus a single application of the quantum channel~$\cE_w$ can be simulated classically in time $O(k^5)$. In total, the matrix~$\rho_{u,v}$ can thus be computed in fewer than $O(k^5(d_u+d_v))$ steps, where $d_u$ and $d_v$ are the degrees of $u$ and $v$ in the interaction graph defined by~$C$.

Finally, we claim that  the expectation value 
$\mu_{u,v}(O)$ in \eqref{eq:rhouvbbetacomputation}
can be computed in time $O(k^5)$. Indeed, 
we have 
\[
\mu_{u,v}(O)=\mathrm{Tr}(\eta O_{u,v})
\quad \mbox{where} \quad 
\eta=B(\beta)^{\otimes 2} \rho_{u,v} B(-\beta)^{\otimes 2}.
\]
Let us write
\[
\rho_{u,v}=\sum_{a,b\in {\mathbb Z}_k} M(a,b)\otimes |a\rangle\langle b|
\]
for some $k\times k$ matrices $M(a,b)$.
Accordingly,
\[
(B(\beta)\otimes I) \rho_{u,v} (B(-\beta)\otimes I)
=\sum_{a,b\in {\mathbb Z}_k} M'(a,b)\otimes |a\rangle\langle b|\equiv \eta'
\]
where $M'(a,b)=B(\beta) M(a,b)B(-\beta)$ can be
computed in time $O(k^3)$ for each pair $a,b$ by
multiplying $k\times k$ matrices. 
Thus one can compute the matrix of $\eta'$ in
the $Z$-basis in time $O(k^5)$. Rewrite this matrix as
\[
\eta'=\sum_{a,b\in {\mathbb Z}_k}  |a\rangle\langle b|
\otimes L(a,b)
\]
for some $k\times k$ matrices $L(a,b)$.
Then 
\[
\eta = (I\otimes B(\beta))\eta' (I\otimes B(-\beta))
=\sum_{a,b\in {\mathbb Z}_k}  |a\rangle\langle b|
\otimes L'(a,b),
\]
where $L'(a,b)=B(\beta) L(a,b)B(-\beta)$ can be
computed in time $O(k^3)$ for each pair $a,b$ by
multiplying $k\times k$ matrices. 
Thus one can compute the matrix of $\eta$ in time $O(k^5)$,
as claimed.
Finally $\mu_{u,v}(O)=\mathrm{Tr}(\eta O_{u,v})$
can be computed in time $O(k^5)$.

\section{Comparison  of QAOA, RQAOA and classical algorithms} \label{sec:numericalresults}
With the algorithm proposed in Section~\ref{sec:classSimQAOA}, we have simulated level-$1$ QAOA and level-$1$-RQAOA on large problem instances of MAX-$k$-CUT. Our focus is on (approximate) $3$-colorability, i.e., $k=3$. 
To compare these hybrid algorithms to classical algorithms, we have also applied the best known efficient classical approximation algorithm (as measured by the approximation ratio) to each problem instance, see  Section~\ref{sec:intro} for an overview of these algorithms.

Concretely, we generate random $d$-regular, $3$-colorable connected graphs on $n$ vertices according to an ensemble~$\cG[d,n]$ defined as follows. Here we assume $d<2n/3$ to be even, and $n$ to be a multiple of~$3$. A graph $G$ is drawn from $\cG[d,n]$ as follows:
\begin{enumerate}
\item
Define a partition $[n]=V_1\cup V_2\cup V_3$ into three pairwise disjoint subsets of each size $|V_j|=n/3$ for $j=1,2,3$.
\item
For each pair $r<s$ with $r,s\in \{1,2,3\}$, generate a random bipartite $d$-regular graph with vertex set $V_r\cup V_s$ and bipartition $V_r:V_s$. This graph is generated by iteratively going through each  $v\in V_r$, and adding edges $(v,w_1),\ldots,(v,w_d)$ where $\{w_1,\ldots,w_d\}\subset V_s$ are chosen uniformly at random among those vertices in $V_s$ that (currently) have degree less than~$d$. If at some point during this process, no such vertices are available, the generation of the bipartite graph is restarted.

\item
 Check whether the obtained graph contains a complete graph on $3$ vertices  (i.e., a triangle) and is connected. If either of these properties is not satisfied, start over.
\end{enumerate}
By definition, a graph $G=(V,E)$ drawn from $\cG[d,n]$ is $3$-colorable and thus the cutsize of the maximum $3$-cut is equal to $C_{\max}=|E|=nd/2$. In particular, expected approximation ratios for QAOA$_1$ can be immediately computed from the expectation $\bra{\psi}C\ket{\psi}$ of the cost function Hamiltonian. Similarly, for any approximate coloring $x\in\mathbb{Z}_3^n$ produced by RQAOA$_1$ (or any algorithm, for that matter), the achieved approximation ratio is  $C(x)/C_{\max}$. We use  the efficient classical algorithm by Newman~\cite{newman18} for comparison, as it has the best approximation guarantee (worst-case bound) among all known efficient classical algorithms for $k = 3$, see Section~\ref{sec:intro}. Note that the theoretical lower bound $0.836008$  on the expected approximation ratio of the classical algorithm~\cite{newman18} used here matches the one established for the algorithm in~\cite{klerketal} in the case $k = 3$ and the algorithm in~\cite{goemansmax3cut}.

\paragraph{Details for implementation.}

We empirically observed that the energy landscape contains local maxima and stationary points which often prevent gradient descent in QAOA and RQAOA from finding the optimal solution. The presence of many points with a negligible gradient is a well-studied phenomenon under the name of 
barren plateaus~\cite{mcclean2018barren}. 
Several promising strategies for alleviating this problem have
been previously explored~\cite{otterbach2017unsupervised,yao2020policy,shaydulin2019multistart}.


For the specific case of level-1 RQAOA applied to MAX-$3$-CUT, the energy function to be optimized depends on four arguments, three of which can be efficiently computed if the fourth one is known. This makes grid search an attractive and cheap alternative to gradient descent as there is only one dimension to be searched. It allows us to sidestep potential convergence problems with gradient descent and to study large graphs. In more detail, the energy~$E(\beta,\gamma)=\bra{\psi(\beta,\gamma)}C\ket{\psi(\beta,\gamma)}$ of the cost function Hamiltonian is a function of $\beta=(\beta_0,\beta_1,\beta_2)\in\mathbb{R}^3$ and $\gamma\in\mathbb{R}$. However, as we show in Appendix~\ref{sec:optnograddescent}, for any fixed value $\gamma\in\mathbb{R}$, parameters $\beta\in\mathbb{R}^3$ maximizing the function $\beta\mapsto E(\beta,\gamma)$ can be found efficiently by  computing roots of a degree-$4$ polynomial and performing a binary search on the unit circle. This significantly reduces the dimensionality of the optimization problem: only an interval of values $\gamma\in\mathbb{R}$ needs to be searched. Because the function to be optimized further satisfies $E((\beta_0,\beta_1,\beta_2),\gamma)=E((-\beta_0,-\beta_2,-\beta_1)),-\gamma)$ as $C$ is self-adjoint and real, and $\overline{B((\beta_0,\beta_1,\beta_2))}=B((-\beta_0,-\beta_1,-\beta_2))$, it further suffices to restrict the grid search to the interval $\gamma\in [0,\pi)$.

We thus chose $50$ equidistant grid points $\gamma_1, \hdots, \gamma_{50}$ in the interval $[0, \pi]$ for $\gamma$. After finding the grid point $\gamma_s$ that minimizes the cost function (see Appendix~\ref{sec:optnograddescent}), we performed another refined grid search with $50$ additional equidistant grid points on the interval $[\gamma_{s-1}, \gamma_{s + 1}]$. 

In our numerical experiments  we find that  optimal angles for QAOA$_1$ for the considered random ensembles of graphs concentrate around certain values. This is in line with the analytical findings of~\cite{brandao2018fixed} for QAOA: there it is shown that for random ensembles of $3$-regular graphs, frequencies of certain local subgraphs concentrate, yielding a simple dependence of the figure of merit on the angles. Similar observations are also used in the proof of our Theorem~\ref{thm:kcut_bound}.

\paragraph{Choice of parameters.}
For each pair of parameters $(n,d) \in \{30, 60, 150, 300\} \times \{4, 6, 8,10\}$, we generated $20$~graphs from~$\cG[d,n]$.

\paragraph{Numerical results.} In Figures~\ref{fig:n30comparisonAppRatios},~\ref{fig:n60comparisonAppRatios},~\ref{fig:n150comparisonAppRatios} and~\ref{fig:n300comparisonAppRatios}, we illustrate obtained achieved approximation ratios of QAOA$_1$, RQAOA$_1$ and the best approximation ratio of the classical algorithm by Newman~\cite{newman18} over $100$ samples generated in the correlation rounding step (magenta, green and blue bars).
For Newman's algorithm, we additionally provide the empirical average and standard deviation of the samples, which are indicated through error bars. Also, the theoretically expected approximation ratio of~$\alpha_{\textrm{Newman}}=0.836008$ for Newman's algorithm is shown.

As expected, we find that RQAOA$_1$ significantly outperforms QAOA$_1$ for the considered family of graphs. Their performance deteriorates with increasing degree~$d$ (although RQAOA$_1$ is less susceptible to this). This is consistent with our no-go result (Theorem~\ref{thm:kcut_bound}), which applies to random $d$-regular graphs of sufficiently large (but also constant) degree~$d$. We note, however, that Theorem~\ref{thm:kcut_bound} does not immediately pertain to the numerical examples considered here because of the (non-explicit) lower bound on~$d$ and its asymptotic nature (as a function of~$n$).

More interesting is the comparison to Newman's classical algorithm. Being a randomized algorithm, it is natural to consider the empirical means and variance of the achieved approximation ratio (although in practice, only the best coloring among a constant number of runs would be used). An analytic understanding of the exact behavior of Newman's algorithm appears to be a difficult problem. As already pointed out by Goemans and Williamson in their seminal paper~\cite{goeWill95}, even computing the variance analytically appears to be challenging. 

In our numerical experiments we find that the variance of Newman's algorithm increases with $d$ and decreases with $n$. In cases where the variance is large, among the $100$~trials considered, we typically find an essentially optimal solution\footnote{Let us mention that
the $3$-coloring problem for dense $3$-colorable graphs admits an efficient algorithm~\cite{alonkahale97}, although this is not directly connected to the performance of Newman's algorithm.}. 
We note, however, that the empirical average approximation ratio appears to be close to the theoretical worst-case guarantee~$\alpha_{\textrm{Newman}}$.

 With the observations above we can distinguish two regimes when comparing the performance of RQAOA to Newman's algorithm:

In the first regime (see e.g., Figs.~\ref{fig:n60largevar} and~\ref{fig:n150largevar}), where Newman's algorithm has high variance, we find that the optimal solution returned is superior to the one provided by RQAOA. Nevertheless and perhaps surprisingly, the coloring output by RQAOA (which is a deterministic algorithm up to estimation errors for expectation values when used in practice) outperforms the average approximation ratio of Newman's algorithm.

In the second regime (see e.g., Figs.~\ref{fig:n30smallvar} and~\ref{fig:n300smallvar}), Newman's algorithm is strongly concentrated about its average. 
In this case, RQAOA$_1$ outperforms virtually  all instances returned by Newman's algorithm. This regime includes instances with a large number of vertices, indicating that RQAOA may be a useful heuristic algorithm for problems of practical interest.

Larger levels $p>1$ may further increase achieved approximation ratios of RQAOA. Unfortunately, exploring this may require 
new ideas or actual quantum devices. This is suggestive of RQAOA being a promising candidate for NISQ applications.

\begin{figure}[h!]
\begin{center}
    \subfloat[$n = 30$, $d = 4$]
    {\includegraphics[height = 5.5cm, keepaspectratio, width=1\textwidth]{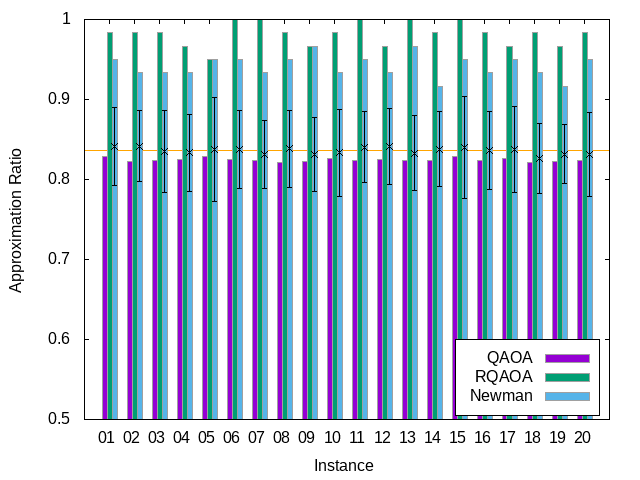}\label{fig:n30smallvar}}
    \subfloat[$n = 30$, $d = 6$ ]{\includegraphics[height = 5.5cm, keepaspectratio, width=1\textwidth]{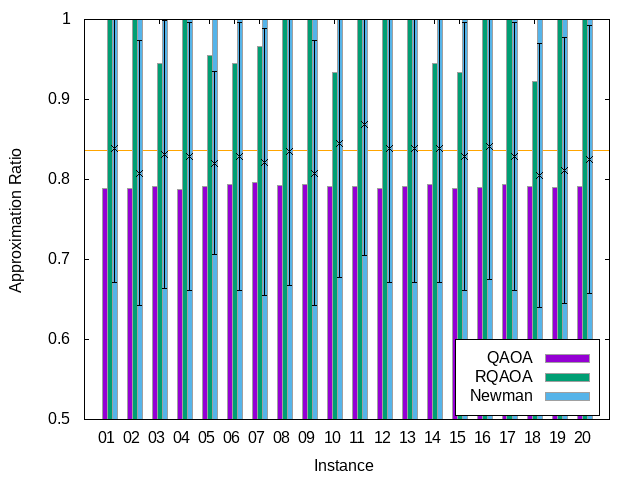}}\\
    \subfloat[$n = 30$, $d = 8$ ]{\includegraphics[height = 5.5cm, keepaspectratio, width=1\textwidth]{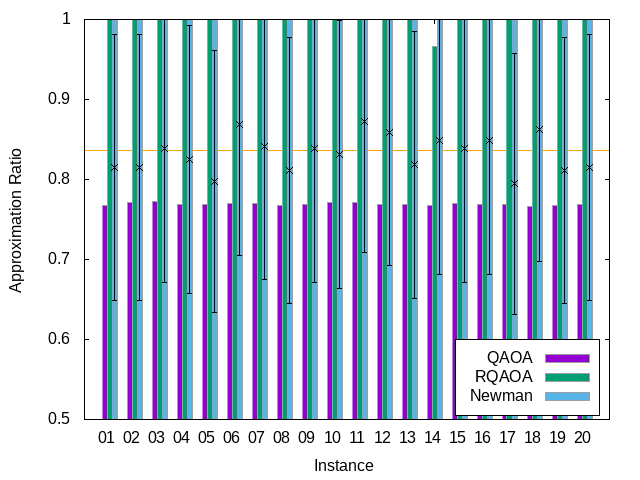}}
    \subfloat[$n = 30$, $d = 10$ ]{\includegraphics[height = 5.5cm, keepaspectratio, width=1\textwidth]{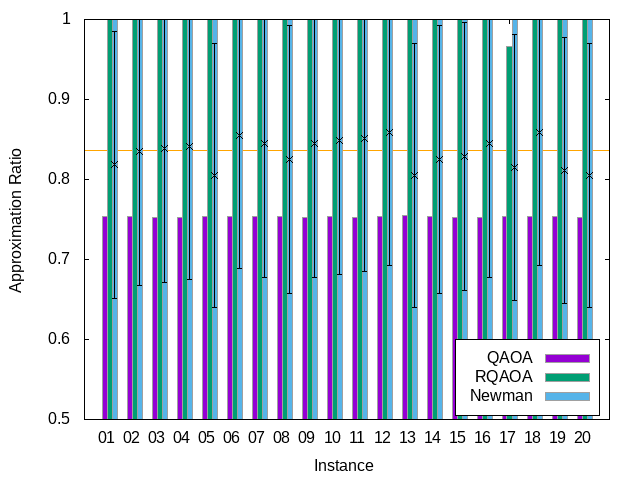}}
    \caption{Comparison of approximation ratios between QAOA$_1$, RQAOA$_1$ and the algorithm by Newman for MAX-$3$-CUT, where we took the best approximation ratio over $100$ samples of graphs with $n = 30$ vertices. The expected approximation ratio of Newman's algorithm is indicated by the horizontal line at $\alpha=0.836008$. For each graph, the empirical mean and standard deviation of Newman's algorithm are indicated through the error bars. }\label{fig:n30comparisonAppRatios}
\end{center}
\end{figure}

\newpage
\clearpage

\begin{figure}[h!]
\begin{center}
    \subfloat[$n = 60$, $d = 4$]
    {\includegraphics[height = 5.5cm, keepaspectratio, width=1\textwidth]{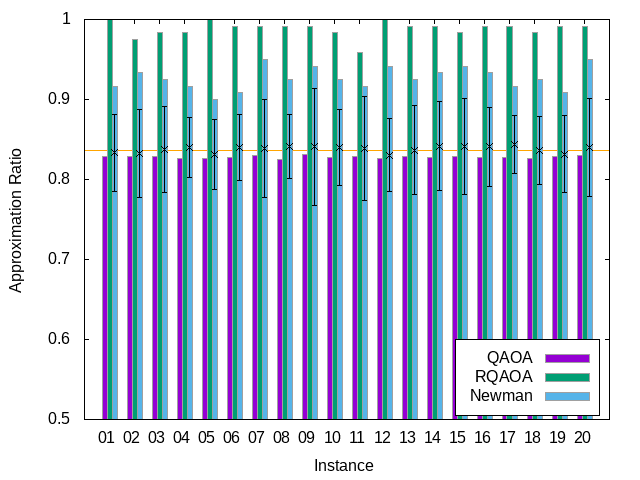}}
    \subfloat[$n = 60$, $d = 6$ ]{\includegraphics[height = 5.5cm, keepaspectratio, width=1\textwidth]{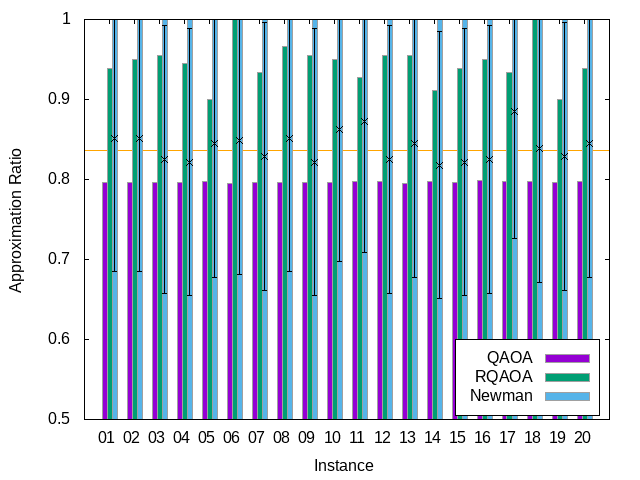}}\\
    \subfloat[$n = 60$, $d = 8$ ]{\includegraphics[height = 5.5cm, keepaspectratio, width=1\textwidth]{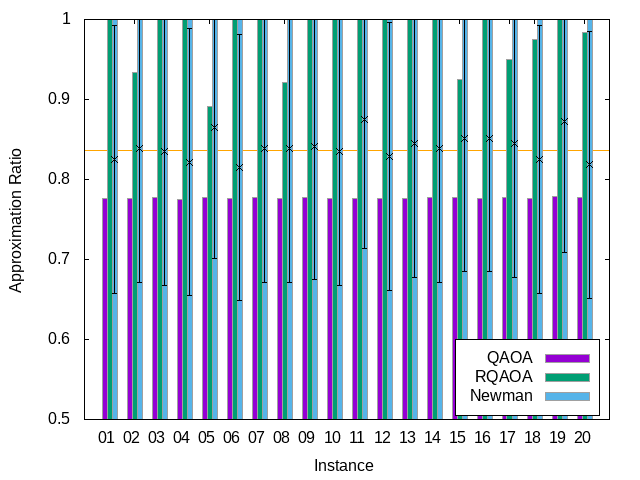}\label{fig:n60largevar}}
    \subfloat[$n = 60$, $d = 10$ ]{\includegraphics[height = 5.5cm, keepaspectratio, width=1\textwidth]{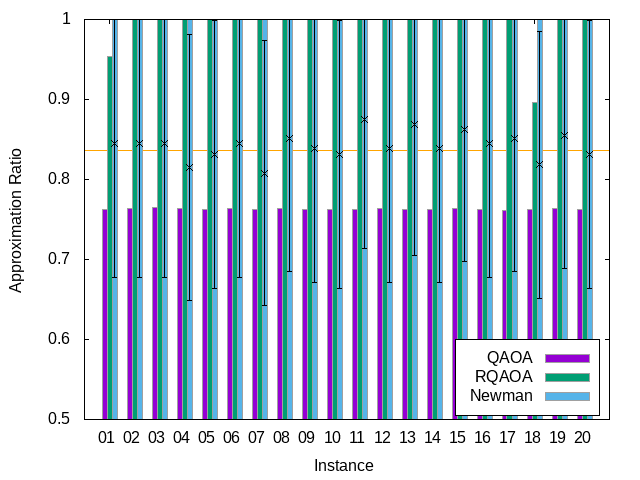}}
    \caption{Comparison of approximation ratios between QAOA$_1$, RQAOA$_1$ and the algorithm by Newman for MAX-$3$-CUT, where we took the best approximation ratio over $100$ samples of graphs with $n = 60$ vertices. The expected approximation ratio of Newman's algorithm is indicated by the horizontal line at $\alpha=0.836008$. For each graph, the empirical mean and standard deviation of Newman's algorithm are indicated through the error bars. }\label{fig:n60comparisonAppRatios}
\end{center}
\end{figure}

\newpage
\clearpage
\begin{figure}[h!]
\begin{center}
    \subfloat[$n = 150$, $d = 4$]
    {\includegraphics[height = 5.5cm, keepaspectratio, width=1\textwidth]{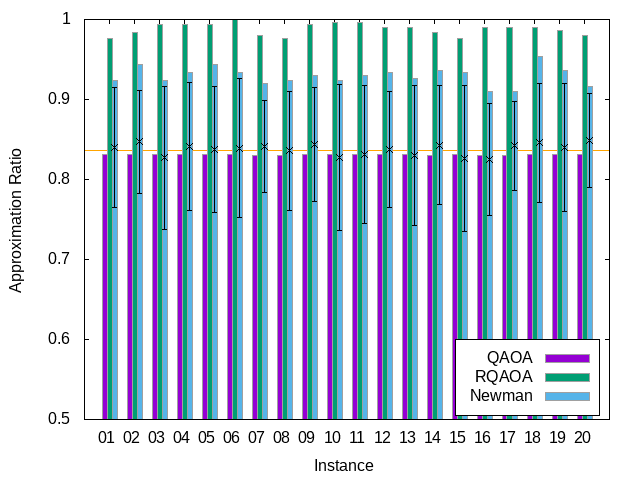}}
    \subfloat[$n = 150$, $d = 6$ ]{\includegraphics[height = 5.5cm, keepaspectratio, width=1\textwidth]{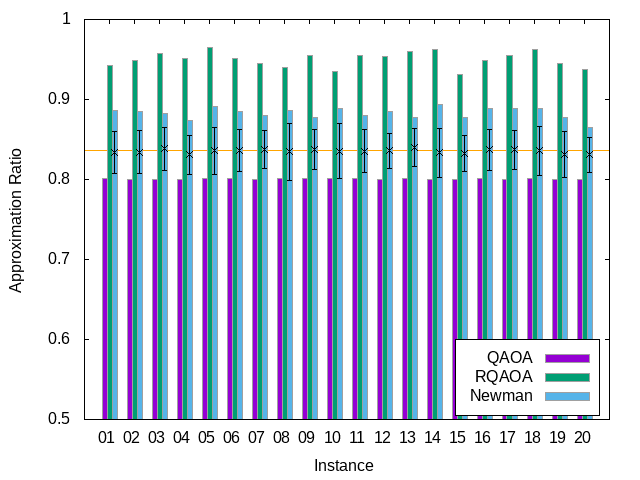}}\\
    \subfloat[$n = 150$, $d = 8$ ]{\includegraphics[height = 5.5cm, keepaspectratio, width=1\textwidth]{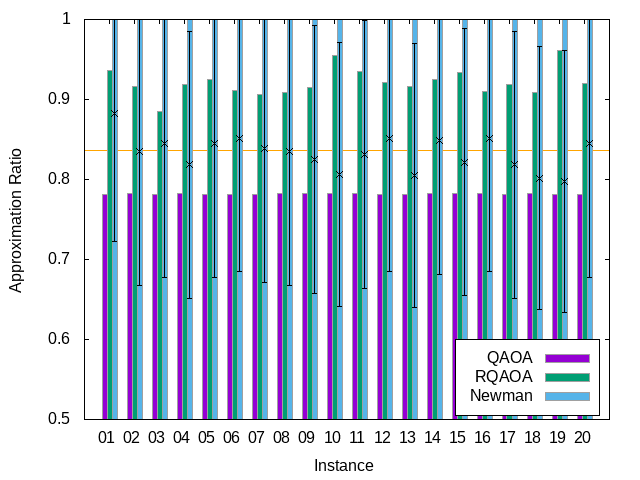}}
    \subfloat[$n = 150$, $d = 10$ ]{\includegraphics[height = 5.5cm, keepaspectratio, width=1\textwidth]{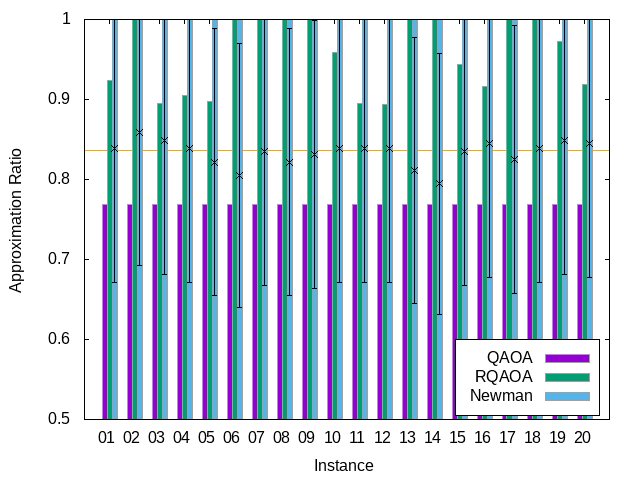}\label{fig:n150largevar}}
    \caption{Comparison of approximation ratios between QAOA$_1$, RQAOA$_1$ and the algorithm by Newman for MAX-$3$-CUT, where we took the best approximation ratio over $100$ samples of graphs with $n = 150$ vertices. The expected approximation ratio of Newman's algorithm is indicated by the horizontal line at $\alpha=0.836008 $. For each graph, the empirical mean and standard deviation of Newman's algorithm are indicated through the error bars. }\label{fig:n150comparisonAppRatios}
\end{center}
\end{figure}
\newpage
\clearpage

\begin{figure}[h!]
\begin{center}
    \subfloat[$n = 300$, $d = 4$]
    {\includegraphics[height = 5.5cm, keepaspectratio, width=1\textwidth]{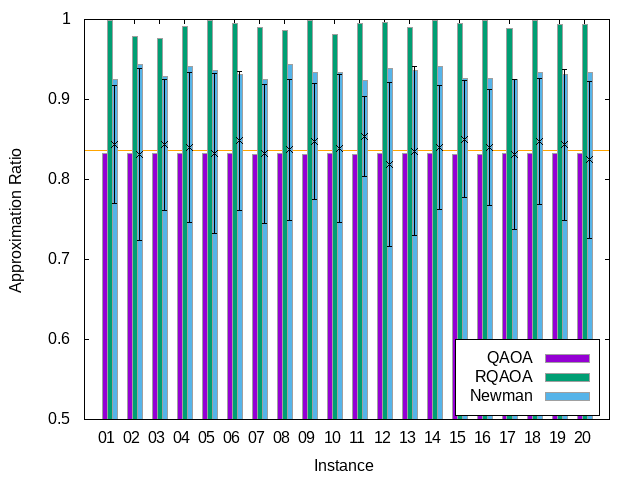}\label{fig:n300smallvar}}
    \subfloat[$n = 300$, $d = 6$ ]{\includegraphics[height = 5.5cm, keepaspectratio, width=1\textwidth]{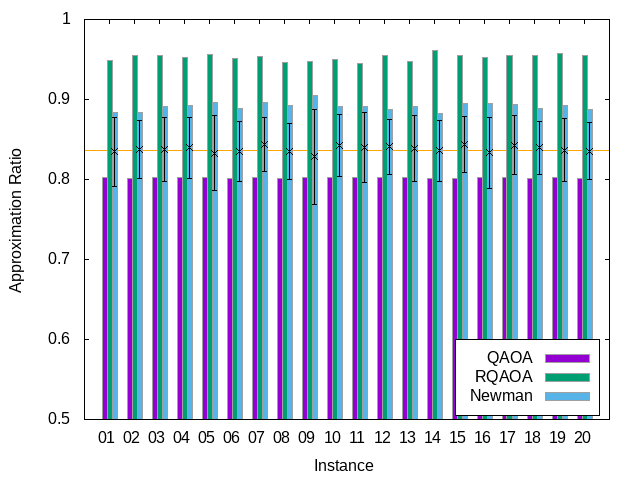}}\\
    \subfloat[$n = 300$, $d = 8$ ]{\includegraphics[height = 5.5cm, keepaspectratio, width=1\textwidth]{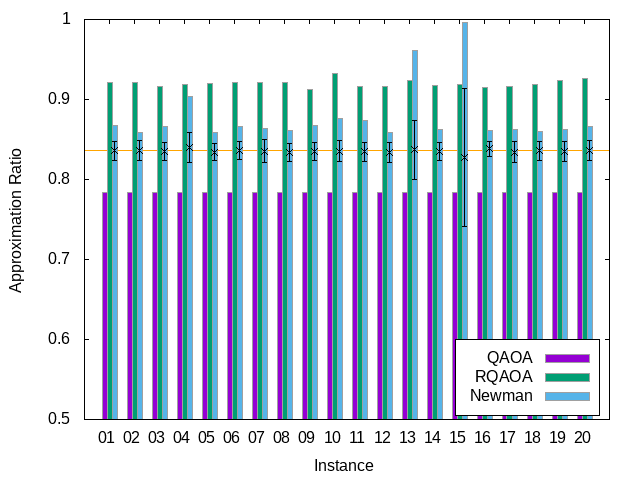}}
    \subfloat[$n = 300$, $d = 10$ ]{\includegraphics[height = 5.5cm, keepaspectratio, width=1\textwidth]{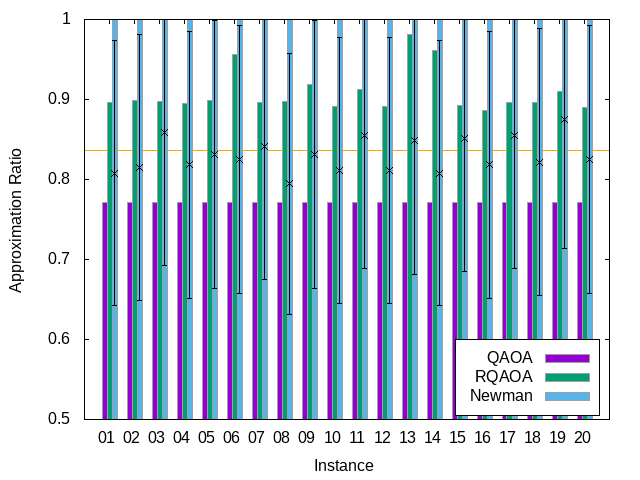}}
    \caption{Comparison of approximation ratios between QAOA$_1$, RQAOA$_1$ and the algorithm by Newman for MAX-$3$-CUT, where we took the best approximation ratio over $100$ samples of graphs with $n = 300$ vertices. The expected approximation ratio of Newman's algorithm is indicated by the horizontal line at $\alpha=0.836008 $. For each graph, the empirical mean and standard deviation of Newman's algorithm are indicated through the error bars. }\label{fig:n300comparisonAppRatios}
\end{center}
\end{figure}

\newpage

\section{Discussion and Outlook}
We formulated a variation of QAOA using qudits which is applicable to non-binary combinatorial optimization. We study its applicability by considering the MAX-$k$-CUT problem with $k>2$. We show that this version of QAOA shares certain limitations with the original proposal for qubits: its expected performance on random $d$-regular $k$-colorable graphs (for $d=o(\sqrt{n})$) approaches random guessing in the limit of large~$d$. We give an efficient classical simulation algorithm for its level-$1$ version. This method extends to problem  Hamiltonians with pairwise commuting $2$-local terms. The algorithm can also be used to simulate level-$1$ recursive QAOA (RQAOA), a hybrid classical-quantum algorithm designed to overcome the limitations of standard QAOA.

While efficient simulability precludes 
the possibility of a quantum advantage compared to the best efficient classical algorithm, it offers a crucial window into the performance of QAOA and RQAOA. We use our simulation algorithm to perform numerical experiments on graphs with up to $300$~vertices, obtaining approximation ratios for MAX-$3$-CUT achieved by level-$1$ QAOA and RQAOA. We compare these results to 
the best known efficient classical  algorithms for MAX-$3$-CUT, specifically the recent algorithm by Newman~\cite{newman18}. Our experiments show that RQAOA$_1$
(which significantly outperforms QAOA$_1$) is competitive with Newman's classical algorithm for generic $k$-colorable graphs. See Section~\ref{sec:numericalresults} for a detailed discussion.

Our observations suggest that  RQAOA may be a viable application for NISQ devices combined with efficient classical computation: it may be applicable to problem sizes of real-world relevance. Our results are indicative of the potential of RQAOA at larger levels~$p$, which  may no longer be simulable classically, and which hold promise. Future work may seek to establish performance guarantees for RQAOA$_p$ akin to the rigorous results available for existing classical algorithms. As an example, we proved in~\cite{brakoeklietan} that for the ``ring of disagrees'', RQAOA$_1$ achieves the optimal approximation ratio. A next step in the analysis of RQAOA would be to find more interesting classes of graphs for which achievability results can be established.

\textbf{Acknowledgments.} This work is supported in part by the Army Research Office under Grant Number W911NF-20-1-0014. The views and conclusions contained in this document are those of the authors and should not be interpreted as representing the official policies, either expressed or implied, of the Army Research Office or the U.S. Government. The U.S. Government is authorized to reproduce and distribute reprints for Government purposes notwithstanding any copyright notation herein. RK and AK gratefully acknowledge support by the DFG cluster of excellence 2111 (Munich Center for Quantum Science and Technology). ET acknowledges funding provided by DOE Award Number: DE-SC0018407 Quantum Error Correction and Spacetime Geometry and the Institute for Quantum Information and Matter, an NSF Physics Frontiers Center (NSF Grant PHY-1733907).

\appendix
\newpage
\section{MAX-$3$-CUT angle optimization without gradient descent}
\label{sec:optnograddescent}

First let us write the expectation value~\eqref{eq:rhouvbbetacomputation}
in the form that makes its dependence on $\beta$
more explicit. 
Recall from~\eqref{eq:Deltadef}
that for each $a\in \mathbb{Z}_k$, the $X$-eigenstate $\ket{\phi_a}\equiv Z^a\ket{+}$ is an eigenvector of the unitary $B(\beta)$ to eigenvalue~$e^{i\beta_a}$. Since the vectors $\{\ket{\phi_a}\}_{a\in\mathbb{Z}_k}$ form an orthonormal basis of~$\mathbb{C}^k$, one gets
\begin{align}
\mu_{u, v}(O)=\sum_{p, q \in \mathbb{Z}_k} e^{i\left(\beta_{p}+\beta_{q}\right)}\left\langle\phi_{p} \otimes \phi_{q}\left|\rho_{u, v} B(-\beta)^{\otimes 2} O_{u, v}\right| \phi_{p} \otimes \phi_{q}\right\rangle\ .
\end{align}
Consider the special case $O=Z^{r} \otimes Z^{-r}$. The identity $Z^{r}\left|\phi_{p}\right\rangle=\left|\phi_{p+r}\right\rangle$ gives 
\begin{align}
\mu_{u, v}\left(Z^{r} \otimes Z^{-r}\right)=\sum_{p, q \in \mathbb{Z}_{k}} e^{i\left(\beta_{p}+\beta_{q}-\beta_{p+r}-\beta_{q-r}\right)}\left\langle\phi_{p} \otimes \phi_{q}\left|\rho_{u, v}\right| \phi_{p+r} \otimes \phi_{q-r}\right\rangle\label{eq:zzexpectidentity}
\end{align}

For $(\beta,\gamma)\in\mathbb{R}^3\times\mathbb{R}$, let $E(\beta,\gamma)=\bra{\psi(\beta,\gamma)}C\ket{\psi(\beta,\gamma)}$ denote the energy in the level-$1$ QAOA state, where the cost function Hamiltonian~$C$ is given by Eq.~\eqref{eq:hpqrdef} with $k=3$.  Let $\gamma\in\mathbb{R}$ be fixed in the following. Let us write 
$E(\beta)=E(\beta,\gamma)$. Here we consider the problem of finding $\max_{\beta} E(\beta)$. 

With  Eq.~\eqref{eq:zzexpectidentity} we have 
\begin{align}
E(\beta)=\sum_{1 \leq p<q \leq n} \sum_{a, b, c \in \mathbb{Z}_{3}} h_{p, q}(c) e^{i\left(\beta_{a}+\beta_{b}-\beta_{a+c}-\beta_{b-c}\right)}\left\langle\phi_{a} \otimes \phi_{b}\left|\rho_{p, q}\right| \phi_{a+c} \otimes \phi_{b-c}\right\rangle\ .
\end{align}
 Clearly, the terms with $c=0$ do not depend on $\beta$ and the sum over $p, q$ gives $\operatorname{Tr}\left(\rho_{p, q}\right)=1 .$ For each $p<q$ the sum of all terms with $c=-1$ and the sum of all terms with $c=1$ are complex conjugates of each other. Indeed, these sums are expected values of $h_{p, q}(1) Z_{p} Z_{q}^{-1}$ and $h_{p, q}(1)^{*} Z_{p}^{-1} Z_{q}$. Here we noted that $h_{p, q}(-1)=h_{p, q}(1)^{*}$. Thus
\begin{align}
E(\beta) =\sum_{\substack{1 \leq p \\ <q \leq n}} h_{p, q}(0) +\sum_{\substack{a, b \\ \in \mathbb{Z}_{3}}} 2 \operatorname{Re}\left(h_{p, q}(1) e^{i\left(\beta_{a}+\beta_{b}-\beta_{a+1}-\beta_{b-1}\right)}\left\langle\phi_{a} \otimes \phi_{b}\left|\rho_{p, q}\right| \phi_{a+1} \otimes \phi_{b-1}\right\rangle\right)
\end{align}
Let $\bar{\beta} \equiv \beta_{0}+\beta_{1}+\beta_{2}$. Some simple algebra gives
\begin{align}
E(\beta)=C+\operatorname{Re} \sum_{a \in \mathbb{Z}_{3}} g_{a} e^{i\left(3 \beta_{a}-\bar{\beta}\right)}
\end{align}
where
\begin{align}
C=\sum_{1 \leq p<q \leq n} h_{p, q}(0)+2 \operatorname{Re} \sum_{a \in \mathbb{Z}_{3}} h_{p, q}(1)\left\langle\phi_{a} \otimes \phi_{a+1}\left|\rho_{p, q}\right| \phi_{a+1} \otimes \phi_{a}\right\rangle,
\end{align}
and
\begin{align}
g_{a}=2 \sum_{\substack{1 \leq p \\ <q \leq n}} h_{p, q}(1)\left\langle\phi_{a} \otimes \phi_{a}\left|\rho_{p, q}\right| \phi_{a+1} \otimes \phi_{a-1}\right\rangle+h_{p, q}(1)^{*}\left\langle\phi_{a} \otimes \phi_{a}\left|\rho_{p, q}\right| \phi_{a-1} \otimes \phi_{a+1}\right\rangle,
\end{align}
Define $\theta_{a}=3 \beta_{a}-\bar{\beta}$. Then $\theta_{0}+\theta_{1}+\theta_{2}=0.$ Thus it suffices to maximize a function
\begin{align}
F^{\prime}(\theta)=\operatorname{Re} \sum_{a \in \mathbb{Z}_{3}} g_{a} e^{i \theta_{a}}
\end{align}
over $\theta \in \mathbb{R}^{3}$ subject to a constraint
\begin{align}
\theta_{0}+\theta_{1}+\theta_{2}=0.
\end{align}
For fixed $\theta_{0}$ the maximum over $\theta_{1}$ can be computed analytically using the identity
\begin{align}
\max _{\theta_{1}} \operatorname{Re}\left(g_{1} e^{i \theta_{1}}+g_{2} e^{-i \theta_{0}-i \theta_{1}}\right)=\max _{\theta_{1}} \operatorname{Re}\left(e^{i \theta_{1}}\left(g_{1}+g_{2}^{*} e^{i \theta_{0}}\right)\right)=\left|g_{1}+g_{2}^{*} e^{i \theta_{0}}\right|.
\end{align}
The maximum is achieved at $\theta_{1}=-\arg \left(g_{1}+g_{2}^{*} e^{i \theta_{0}}\right) .$ Let $z \equiv e^{i \theta_{0}}$. It remains to maximize a function
\begin{align}
\begin{array}{c}
F^{\prime \prime}(z)=\max _{\theta \in \mathbb{R}^{3}} \quad F^{\prime}(\theta)=\operatorname{Re}\left(g_{0} z\right)+\left|g_{1}+g_{2}^{*} z\right| \\
\theta_{0}+\theta_{1}+\theta_{2}=0 \\
e^{i \theta_{0}}=z
\end{array}
\end{align}
over the unit circle $\{z \in \mathbb{C}:|z|=1\}$. Fix some $f \in \mathbb{R}$. One can easily check that conditions $F^{\prime \prime}(z)=f,|z|=1$ imply $p_{f}(z)=0,|z|=1,$ where $p_{f}(z)$ is a degree- 4 polynomial
\begin{align}
p_{f}(z)=&\left(-\frac{g_{0}^{2}}{4}\right) z^{4}+\left(g_{1}^{*} g_{2}^{*}+f g_{0}\right) z^{3}+z^{2}\left(\left|g_{1}\right|^{2}+\left|g_{2}\right|^{2}-f^{2}-(1 / 2)\left|g_{0}\right|^{2}\right) \\ &+z\left(g_{1} g_{2}+f g_{0}^{*}\right)-\frac{\left(g_{0}^{*}\right)^{2}}{4}
\end{align}
For a given $f$ one can analytically compute roots of $p_{f}(z),$ select roots lying on the unit circle, and check whether at least one of those roots $z$ satisfies $F^{\prime \prime}(z)=f .$ Thus one can check whether an equation $F^{\prime \prime}(z)=f$ has solutions $z$ on the unit circle. Now one can maximize $F^{\prime \prime}(z)$ over the unit circle using the binary search over $f$. Given the optimal value~$\theta_0=\arg\max_{\theta}F^{\prime\prime}(e^{i\theta})$,  one computes $\theta_{1}=-\arg \left(g_{1}+g_{2}^{*} e^{i \theta_{0}}\right), \theta_{2}=-\theta_{0}-\theta_{1},$ and solves a linear system $\theta_{a}=3 \beta_{a}-\bar{\beta}$ to find $\beta_{0}, \beta_{1}, \beta_{2}$.

\bibliographystyle{plain}

\bibliography{q}

\appendix
\end{document}